\documentclass[]{scrartcl}
\usepackage{amssymb}
\usepackage{amsmath}
\usepackage{amsthm}

\usepackage{enumitem}
\usepackage{xspace}
\usepackage{authblk}
\usepackage{xcolor}
\usepackage{mathabx}
\usepackage{graphicx}

\usepackage{caption}
\usepackage[font=normalsize]{subcaption}

\usepackage[bibstyle=numeric,citestyle=numeric, uniquename=false,sortcites=true,maxbibnames=20, maxcitenames=20,doi=false, url=false, eprint=false, firstinits=true, backend=bibtex, natbib=true, isbn=false, date=year]{biblatex} 
 \bibliography{OCAMbibtex,litATpaper,literature_latin1}

\newcommand{\mbb}[1]{\mathbb #1}

\newcommand{\mcl}[1]{\mathcal #1}


\newcommand{\EMPC}{\textsc{empc}\xspace}
\newcommand{\NMPC}{\textsc{nmpc}\xspace}
\newcommand{\MPC}{\textsc{mpc}\xspace}
\newcommand{\OCP}{\textsc{ocp}\xspace}

\newcommand{\AC}{\textsc{ac}\xspace}
\newcommand{\DC}{\textsc{dc}\xspace}
\newcommand{\OPF}{\textsc{opf}\xspace}
\newcommand{\OCPs}{\textsc{ocp}s\xspace}
\newcommand{\NLP}{\textsc{nlp}\xspace}

\newcommand{\IEEE}{\textsc{ieee}\xspace}
\newcommand{\IPOPT}{\textsc{ipopt}\xspace}
\newcommand{\MATPOWER}{\textsc{matpower}\xspace}
\newcommand{\MATLAB}{\textsc{matlab}\xspace}
\renewcommand{\phi}{\varphi}
\renewcommand{\epsilon}{\varepsilon}

\newcommand{\N}{\mathbb{N}}
\newcommand{\R}{\mathbb{R}}
\newcommand{\U}{\mathbb{U}}
\newcommand{\X}{\mathbb{X}}
\newcommand{\Y}{\mathbb{Y}}
\newcommand{\D}{\mathbb{D}}
\newcommand{\Z}{\mathbb{Z}}

\newcommand{\Rnx}{\mathbb{R}^{n_x}}
\newcommand{\Rny}{\mathbb{R}^{n_y}}
\newcommand{\Rnu}{\mathbb{R}^{n_u}}

\newcommand{\imag}{\mathrm{j}}
\newcommand{\nodVar}{y}
\newcommand{\nodState}{x}
\newcommand{\nodAlgstate}{z}
\newcommand{\nodInput}{u}
\newcommand{\nodDist}{d}

\DeclareMathOperator{\diag}{diag}
\DeclareMathOperator{\interior}{int}
\DeclareMathOperator{\rank}{rank}
\newcommand{\inte}[1]{\interior{\mbb{#1}}}
\newcommand{\eBox}{$\footnotesize\hfill\blacksquare$}

\newtheorem{rema}{Remark}
\newtheorem{lemma}{Lemma}[section]
\newtheorem{corollary}{Corollary}[section]
\newtheorem{proposition}{Proposition}[section]
\newtheorem{theorem}{Theorem}[section]
\newtheorem{assumption}{Assumption}[section]

\newtheorem{definition}{Definition}[section]

\def\keywords{\vspace{0.1em}
	{\textit{Keywords}:\,\relax%
}}

\begin{document}

\author{Timm Faulwasser}
\author{Alexander Engelmann}

\affil{Institute for Automation and Applied Informatics (IAI) \\ 
	   Karlsruhe Institute of Technology (KIT)\\ Hermann-von-Helmholtz-Platz 1 \\
	   76344 Eggenstein-Leopoldshafen, Germany \\
	E-Mail: timm.faulwasser@ieee.org, alexander.engelmann@kit.edu
}

 \title{Towards economic NMPC for multi-stage AC optimal power flow}
\date{}

\maketitle

\begin{abstract}
Recently there has been considerable progress on the analysis of stability and performance properties of so-called economic Nonlinear Model Predictive Control (\NMPC) schemes; i.e. \NMPC schemes employing stage costs that are not directly related to distance measures of pre-computed setpoints.  At the same time, with respect to the energy transition, the use of \NMPC schemes is proposed and investigated in a plethora of papers in different contexts.  
For example receding-horizon approaches  to generator dispatch problems, which is also known as multi-stage Optimal Power Flow (\OPF), naturally lead to economic \NMPC schemes based on non-convex discrete-time Optimal Control Problems (\OCP). 
The present paper investigates the transfer of analytic results available for general economic \NMPC schemes to receding-horizon multi-stage \OPF. 
We propose a blueprint formulation of multi-stage \OPF including \AC power flow equations. Based on this formulation we present results on the dissipativity and recursive feasibility properties of the underlying \OCP. Finally, we draw upon simulations using a $5$ bus system and a $118$ bus system to illustrate our findings.
\end{abstract}
 \keywords{Model predictive control, dissipativity, power systems, dynamic optimal power flow,  economic generator dispatch}

\section{Introduction}
Recently, there has been considerable research progress in analyzing so-called economic Nonlinear Model Predictive Control (\NMPC) schemes based on (system-theoretic) dissipativity assumptions. The appealing promise of economic \NMPC is that the considered stage cost does not need to be related to the distance to specific setpoint as in stabilizing \NMPC \cite{Mayne00a, Rawlings17, Gruene17a}. Rather economic \NMPC allows considering quite generic stage costs. 
While the early works on economic \NMPC \cite{Diehl11a, Angeli12a} rely on specific dissipation inequalities and terminal constraints, it has also been analyzed under which conditions economic \NMPC without terminal constraints yields practical asymptotic stability of the discrete-time closed loop \cite{Stieler14b}, respectively, practical convergence for the continuous-time counterpart \cite{epfl:faulwasser15g}. The crucial observation is that the underlying  dissipation inequality---which relates the stage cost of the Optimal Control Problem (\OCP) with the underlying dynamic system---induces a so-called turnpike property in the open-loop \OCP solution, see e.g. \cite{epfl:faulwasser15h, Gruene13a}. Indeed one can also exploit the turnpike property  to enforce asymptotic stability without (primal) terminal constraints by considering an end penalty which induces a dual/adjoint terminal constraint \cite{kit:zanon18a, kit:faulwasser18e}. Actually one may claim that in the nominal time-invariant setting there exists a mature understanding of dissipativity-based approaches to economic \NMPC; we refer to \cite{kit:faulwasser18c} for a recent and comprehensive literature review.

In power systems research discrete-time \OCPs occur in generator dispatch problems under the label \emph{multi-stage Optimal Power Flow (\OPF)}.\footnote{Note that there is no unified notion for these problems.  In the literature these problems are also referred to as \emph{dynamic optimal power flow}\cite{Xie2001}, \emph{time-constrained optimal power flow}\cite{Meyer-Huebner2015} or \emph{dynamic economic dispatch}\cite{Ross1980}, see \cite{Grainger1994, Frank2012, Frank2016} for tutorial introductions on \OPF problems.} In multi-stage \OPF the aim is to minimize the (monetary) cost of active power generation while satisfying the physical laws of the underlying grid (power flow equations) and operational constraints like generator limits, voltage bounds and line limits, see \cite{Grainger1994, Frank2012, Frank2016,kit:faulwasser18d}. The main challenge of \OPF problems is twofold: practically relevant grid models can easily comprise several hundreds to thousands of nodes and the power flow equations constitute a set of nonlinear equality constraints.
In the past, most \OPF problems (sometimes also called economic dispatch problems in case of simplified grid models) have been considered decoupled in time (i.e. single stage) as the influence of available storage (pumped-hydro plants etc.) has been negligible. Nowadays, 
the volatility of renewable energy production induces challenges for efficient energy system operation; e.g. violations of generator ramp limits imply the need to investigate novel control strategies. Moreover, the amount of energy storage present in the grid in various forms (batteries, electric vehicles, thermal storage) is increasing rapidly. Hence, there is a need to investigate time-coupled multi-stage \OPF problems and \NMPC solutions thereof.  

There exists a plethora of papers and results on discrete-time optimal control and on model predictive control of power and energy systems. In general one can distinguish two main lines of research on \MPC for power systems:  
\begin{enumerate}
\item \MPC for $\{$energy management, generator dispatch, etc.$\}$ using price-based objectives and considering rather slow time-scales ($15$ minutes -- $1$ hour); \item and \MPC targeting voltage and frequency stabilization on rather fast time-scales (a few seconds). 
\end{enumerate}

As for large grids \OPF problems are already challenging in the single-stage case, early works on multi-stage \OPF usually consider  linearized grid physics (i.e. \DC power flow equations) at the cost of losing reactive power and voltage information.  For example \cite{Ross1980,Wood1982} use generator ramp constraints and solve multi-stage \OPF via  dynamic programming, while Lagrangian relaxation techniques for time-wise decomposition are employed in \cite{Batut1992}.
Triggered by the need of handling voltage and reactive power injection limits, follow-up works consider the full \AC grid model including the highly nonlinear power flow equations \cite{Chen2005,Olivares2014,Olivares2015}.  A similar approach is used in \cite{Kennel2013}, where a hierarchical \MPC scheme is used for EV charging and concurrent frequency control; in \cite{Ulbig2011} a three-layer scheme considering transmission planning, power dispatch and frequency regulation is proposed.
A recent review on dispatch via multi-stage \OPF can be found in \cite{Xia2010}. 
Economic \MPC for multi-energy systems (considering thermal energy) including storage can be found in \cite{Arnold2010,Stoyanova2018}.
Furthermore, multi-stage \OPF can be used for cost-optimal energy management of microgrids based on price signals using a linearized \DC model \cite{Braun2017}. While \cite{kit:appino17b} discusses the effect of uncertain forecasts on storage scheduling, a distributed approach to scheduling is suggested in \cite{kit:braun18a}. The recent paper \cite{Meyer-Huebner2017} considers \MPC for storage scheduling in high-voltage grids. Stochastic economic \NMPC of micro-girds is proposed in \cite{Vasilj17a}. 
An approach to combined frequency regulation and economic dispatch via economic \MPC, including stability analysis, with \DC grid models is presented in  \cite{Koehler2017}. 

With respect to branch (ii) several works \cite{Hiskens2005,Hovgaard2010,Zima2005,Riverso2015}  consider voltage or frequency stabilization; frequency control and automatic generation control\cite{Venkat2008} as well as the use of quadratic stability constraints ensuring stability (which lead to state path constraints instead of terminal constraints)\cite{Tran2014} have been investigated.  Approaches to consider voltage control via minimizing the absolute value of total reactive power injection are presented in \cite{Zima2005}. A hierarchical economic \MPC scheme considering generator cost and electricity prices for power from an upper grid level is proposed in \cite{Hans2018}. In \cite{Bansal2014} a multi-stage voltage control scheme for electric vehicle charging  ensuring recursive feasibility and exponential stability is proposed. Scheduling based on linearized power-flow equations and set-point tracking is investigated in \cite{Almassalkhi2015,Almassalkhi2015a}. Common to all these approaches is that stabilizing/tracking \NMPC formulations are employed and that the power grids is approximated in a linearized fashion (i.e. as \DC-\OPF).

The present paper investigates receding-horizon solutions to multi-stage \OPF problems with economic cost functions; i.e. we consider an economic \NMPC approach to multi-stage \AC \OPF. While there has been considerable interest in structured \MPC design for voltage and frequency control,  the works on  \NMPC multi-stage \OPF are mostly agnostic to the recent progress on economic \NMPC. Moreover, the large majority of works relies on convex \DC approximations of the underlying power grid. The present paper aims at partially closing this gap; i.e. we try to transfer and verify the dissipativity notions established in economic \NMPC to so-called multi-stage \OPF problems while explicitly considering  the nonlinear \AC power-flow equations as equality constraints. To this end and extending \cite{kit:faulwasser18d}, we will provide a  blueprint \OCP formulation of multi-stage \AC \OPF that enables a formal analysis. Here, we go beyond \cite{kit:faulwasser18d} by including energy storage and by showing how to handle the power-flow equations by means of projection. We then analyze the dissipativity properties of this \OCP for constant  power injections. Moreover, we discuss recursive feasibility in case of varying disturbances. 
To the best of the authors' knowledge, the present paper is the first one explicitly establishing dissipativity properties for \OCPs arising from multi-stage \AC \OPF. 

The remainder of this paper is structured as follows: Section \ref{sec:EMPC} recalls the essentials of dissipativity-based approaches to economic \NMPC; Section \ref{sec:problem} introduces the multi-stage \OPF problem as a discrete-time \OCP. Section \ref{sec:results} analyses the recursive feasibility and dissipativity properties of this \OCP; moreover the properties of the closed \NMPC loop are discussed. 
Section \ref{sec:example} draws upon two examples to illustrate the findings. The paper ends with outlook and conclusions in Section \ref{sec:conclusion}.

\subsection*{Notation}
We denote the state and control of a system, respectively, as $x\in \mathbb{R}^{n_x}$ and $u\in \mathbb{R}^{n_u}$.
We define $\mathbb{I}_{[a,b]} := \{a,\ldots,b\}$, $a,b \in \mathbb{Z}$, i.e. integers.
The matrix $I^{n_x}$ 
 is the identity matrix of $ \R^{n_x\times n_x}$, while $0^{n_x}$ 
 is the zero matrix of  $\R^{n_x\times n_x}$.
Subsets of $\R^{n_x}$ are denoted by $\X$; the interior of $\X$ is denoted by $\interior(\X)$. The pointwise image of a set $\X\subseteq \Rnx$ under a map $h:\Rnx \to \R^{n_y}$ is written as $h(\X)=\Y$, and the pre-image of $\Y$ under $h$---i.e. the set $\{x\ | \ h(x) \in \Y\}$---is denoted by $h^{-1}(\Y)$.
The concatenation of $x \in \Rnx$ and $u\in\Rnu$ is written as $\begin{bmatrix}x &u\end{bmatrix}^\top$.

\section{Preliminaries -- Dissipativity and economic NMPC} \label{sec:EMPC}
We consider time-invariant discrete-time systems described by  \vspace*{1.5mm}
\begin{equation} \label{eq:sys_def}
x(t+1) = f(x(t), u(t)),\qquad x(0) = x_0 \in \X_0,  \vspace*{1.5mm}
\end{equation}
where $x \in \Rnx$ is the state, $u \in \Rnu$ is the input, $f: \Rnx \times \Rnu \to \Rnx$ denotes the continuous state transition map, and
$t\in \mbb{Z}$ is the discrete time variable.
 States and inputs are assumed to be restricted by the compact sets $\X \subset \Rnx$ and $\U \subset \Rnu$, respectively. 
Corresponding to system \eqref{eq:sys_def}, one considers a cost functional  \vspace*{1.5mm}
\begin{equation} \label{eq:cost_functional}
J_N(x_0, u(\cdot)) = \sum_{k = 0}^{N-1} \ell(x(k), u(k)) \vspace*{1.5mm}
\end{equation}
 which models the performance requirements of \eqref{eq:sys_def} with  the continuous stage cost  $\ell: \X \times\U \to \mbb{R}$.
In general, an \NMPC scheme without terminal constraints is based on solving 
the following finite-horizon discrete-time \OCP at each time step $t=0,1,2,\ldots$:
\begin{subequations} \label{eq:EMPC_OCP}
    \begin{align}
    V_N(x(t)) = \min_{u(\cdot|t)} &\sum_{k = 0}^{N-1} \ell(x(k|t), u(k|t))  \label{eq:EMPC_OCP:obj} \\
    \text{subject to}&  \nonumber\\
     x(k+1|t) &= f(x(k|t), u(k|t)), \quad x(0|t) = x(t),\quad &k\in \mathbb{I}_{[0,N-1]}  \label{eq:EMPC_OCP:dyn}\phantom{.}\\
    \begin{bmatrix} x(k|t) &u(k|t)\end{bmatrix}^\top&\in \X\times\U,  \hspace*{4.05cm}	&k\in \mathbb{I}_{[0,N-1]} . \label{eq:EMPC_OCP:con}  \vspace*{3.5mm}
  \end{align}
  \end{subequations}
  Here, $N \in \mbb{N}$ is the prediction horizon and $V_N(x(t))$ is the optimal value function of \eqref{eq:EMPC_OCP}. 
Equations \eqref{eq:EMPC_OCP:dyn}--\eqref{eq:EMPC_OCP:con} summarize the equality constraints imposed by the dynamics and additional constraints on states and inputs, which are typically described by inequalities. As \eqref{eq:EMPC_OCP} is essentially a Nonlinear Program (\NLP), we require continuity of $f$ and $\ell$. In case the feasible set is non-empty our assumptions imply that an optimal solution to problem~\eqref{eq:EMPC_OCP} exists, see \citep{Bertsekas99a}.  

The superscript $(\cdot)^\star$ indicates variables related to optimal solutions  of \eqref{eq:EMPC_OCP}.
Furthermore, in order to distinguish predicted variables from closed-loop variables, we use the notation $\cdot (k|t)$ to denote $k$-step ahead predictions computed at time $t\in \Z$ based on the current (real) system state $x(t)$.
For example, we write $u^\star(k|t)$ to refer to the $k$th element of the optimal predicted input sequence to \OCP \eqref{eq:EMPC_OCP} computed for the initial condition $x(t)$, and we denote the corresponding optimal state trajectory by $x^\star(\cdot|t)$. 
Hence, one defines the \NMPC feedback as  \vspace*{1.5mm}
\[\mu_N(x(t)) := u^\star(0 | t), \vspace*{1.5mm}\]
 i.e., as the first element of the optimal input sequence, and obtains the next state of the closed loop system as  \vspace*{1.5mm}
\begin{equation} \label{eq:sys_closed_loop}
x(t+1) = f(x(t), \mu_N(x(t))),\qquad x(0) = x_0.  \vspace*{1.5mm}
\end{equation}
Throughout this paper we will not consider any plant-model mismatch, i.e., we assume that $f$ in \eqref{eq:EMPC_OCP:dyn} and in \eqref{eq:sys_closed_loop} are identical. 

Frequently the analysis of the closed-loop system \eqref{eq:sys_closed_loop} is based on the following dissipativity notion originally suggested by \citep{Angeli12a}.
\begin{definition}[Strict dissipativity]\label{def:DI}~\\
\begin{enumerate}[label=(\roman*)]\vspace*{-.5cm}
\item System \eqref{eq:sys_def} is said to be \emph{dissipative with respect to the steady-state pair $(x_s, u_s) \in\X\times\U$}, 
if there exists a non-negative function $\lambda:\X \to \R$ such that for all $x\in \X, u\in \U$
\begin{subequations} \label{eq:DI}
\begin{equation}\label{eq:DI_non_str}
\lambda(f(x, u)) -\lambda(x) \leq \ell(x, u) - \ell(x_s, u_s).
\end{equation}
\item If, additionally, there exists $\alpha_\ell\in\mcl{K}_\infty$ such that
\begin{equation}  \label{eq:DI_str}
\lambda(f(x, u)) -\lambda(x) \leq -\alpha_\ell\left(\left\|(x-x_s)\right\|\right) + \ell(x, u) - \ell(x_s, u_s).
\end{equation}
\end{subequations}
then  \eqref{eq:sys_def} is said to be \emph{strictly dissipative with respect to $(x_s, u_s)$}. 

\item If, for all $N\in\N$ and all $x_0 \in \X_0$, the dissipation inequalities \eqref{eq:DI} hold along any optimal pair of \OCP  \eqref{eq:EMPC_OCP}, 
then \emph{\OCP \eqref{eq:EMPC_OCP} is said to be (strictly) dissipative with respect to $(x_s, u_s)$}.   \eBox
\end{enumerate}
\end{definition}

We remark that $\ell$ in  \eqref{eq:DI} is the stage cost of \OCP \eqref{eq:EMPC_OCP}. Denoting $s:\X\times\U \to \R$
\[
s(x,u) :=  \ell(x, u) - \ell(x_s, u_s) 
\]
as a supply rate and $\lambda$ in \eqref{eq:DI} as a storage function, it is clear that \eqref{eq:DI} are dissipation inequalities, see \citep{Moylan14a, Willems72a, Willems07a}. We remark that the original dissipativity concept proposed by Jan Willems \citep{Willems72a} allows for an intuitive energy-related interpretation. In the context of the present paper, however, dissipativity is an abstract system-theoretic concept used to analyze \OCPs. As such it is not directly related to the actual dissipation of energy. Moreover, it is straightforward to show that dissipativity implies that the steady state pair $(x_s, u_s)$ is a globally optimal solution to the steady-state optimization problem
\[
\min_{x, u} ~\ell(x, u) \quad \text{ subject to } x = f(x, u) \text{ and } \begin{bmatrix} x & u\end{bmatrix}^\top \in \X\times\U.
\]
Finally, we remark that in the literature on economic \NMPC different variants of the inequality \eqref{eq:DI_str} are considered; i.e. occasionally strictness in $x$ and $u$ is required \cite[Rem. 3.1]{kit:faulwasser18c}. For the purposes of the present paper, however, it suffices to consider strictness with respect to $x$ only. 

Next, we summarize conditions under which the \NMPC scheme defined by  \eqref{eq:EMPC_OCP} yields practical asymptotic stability  of the closed-loop system \eqref{eq:sys_closed_loop}.
\begin{assumption}[Reachability and dissipativity]
\label{ass:EMPC}~\\
\begin{enumerate}[label=(\roman*)]\vspace*{-.5cm}
\item \OCP \eqref{eq:EMPC_OCP} is strictly dissipative with respect to $(x_s, u_s) \in \X\times\U$ in the sense of Definition \ref{def:DI} (iii). 
\item For all $x_0 \in \X_0$, there exists an infinite-horizon admissible input $u(\cdot;x_0)$, $c \in (0,\,\infty)$, $\rho\in [0,1)$, such that
\[
\|(x(k;x_0, u(\cdot;x_0)), u(k;x_0)) - (x_s, u_s)\| \leq c\rho^k,
\]
i.e. the steady state $x_s$ is exponentially reachable.
\item The Jacobian linearization of system \eqref{eq:sys_def} at  $(x_s, u_s) \in \interior \left(\X\times\U\right)$ is  $n_x$-step reachable.\footnote{Recall that $n_x$-step reachability of $x^+ = Ax +Bu$ implies that starting from $x=0$ one can reach any $x\in\Rnx$ within $n_x$ time steps; and one can steer any  $x \not=0$ to the origin within $n_x$ time steps, cf. \citep{Weiss72a}. In other words, $n_x$-step reachability implies $n_x$-step controllability.} \eBox
\end{enumerate}
\end{assumption}
\begin{theorem}[Practical stability of \EMPC without terminal constraints] \label{thm:stab_EMPC}
Let Assumption \ref{ass:EMPC} (i--iii) 
hold and suppose that $\X$ is compact. 
Then, there exists a sufficiently large horizon $N\in\N$, such that the closed-loop system \eqref{eq:sys_closed_loop} arising from the receding horizon solution to \OCP \eqref{eq:EMPC_OCP} has the following properties:
\begin{enumerate}[label=(\roman*)]
\item If, for the horizon $N\in\N$,  \OCP \eqref{eq:EMPC_OCP} is feasible for $t = 0$ and $x(0) \in \X_0$, then it is feasible for all $t \in \N$.
\item There exist $\rho \in \R^+$ and $\beta \in \mcl{K}\mcl{L}$ such that, for all $x(0) \in \X_0$, the closed-loop trajectories generated by  \eqref{eq:sys_closed_loop} satisfy
\[
\|x(t) - x_s\| \leq \max\{\beta(\|x(0) - x_s\|,\,t),\, \rho\}. 
\vspace*{-.65cm}
\]   
\eBox
\end{enumerate}
\end{theorem}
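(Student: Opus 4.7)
The strategy is to invoke the well-known reduction of economic NMPC without terminal constraints to a stabilizing NMPC problem via the rotated stage cost and then to appeal to Gr\"une--Stieler type practical stability results \cite{Stieler14b, Gruene13a}. Concretely, using the strict dissipation inequality \eqref{eq:DI_str} with storage function $\lambda$, I would define
\[
L(x,u) := \ell(x,u) - \ell(x_s, u_s) + \lambda(x) - \lambda(f(x,u)),
\]
which by Assumption~\ref{ass:EMPC}(i) is non-negative and lower-bounded by $\alpha_\ell(\|x - x_s\|)$. Along any feasible trajectory the rotated cost telescopes to
\[
\sum_{k=0}^{N-1} L(x(k), u(k)) = J_N(x_0, u(\cdot)) - N\ell(x_s, u_s) + \lambda(x_0) - \lambda(x(N)),
\]
so the minimizers of $J_N$ and of the rotated functional coincide up to the bounded boundary correction $\lambda(x_0) - \lambda(x(N))$.

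First, I would establish an upper bound of the form $V_N(x_0) \leq N\ell(x_s, u_s) + C(x_0)$, uniformly in $N$, by inserting the admissible input of Assumption~\ref{ass:EMPC}(ii) (which reaches $x_s$ exponentially) into the cost, using compactness of $\X$ and continuity of $\ell$ to bound the remainder. Combined with the non-negative rotated-cost lower bound, this yields the \emph{turnpike property} for the optimal pair: the fraction of time indices $k \in \I{0}{N-1}$ at which $\|x^\star(k|t) - x_s\|$ exceeds any fixed tolerance is bounded independently of $N$, and in fact shrinks as $N \to \infty$.

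Next, I would exploit the $n_x$-step reachability of the Jacobian linearization at $(x_s, u_s) \in \interior(\X \times \U)$ (Assumption~\ref{ass:EMPC}(iii)) to obtain local controllability of the nonlinear system in a neighborhood of the steady state. Together with the turnpike this provides the ingredients required to apply the arguments of \cite{Stieler14b}: (a) continuity and uniform (in $N$) boundedness of the rotated value function at $x_s$; (b) construction, from the optimal open-loop solution at time $t$, of a feasible candidate at time $t+1$ by one-step shift and a local feedback appendage that steers the terminal state back into $\X$, which simultaneously yields recursive feasibility~(i). The standard descent argument on the rotated value function then produces a practical Lyapunov inequality
\[
V_N(f(x, \mu_N(x))) - V_N(x) \leq -\alpha_\ell(\|x - x_s\|) + \delta(N),
\]
with $\delta(N) \to 0$ as $N \to \infty$, from which the $\mcl{K}\mcl{L}$-bound in~(ii) follows by standard Lyapunov-type arguments \cite{Gruene17a, Rawlings17}.

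The main technical obstacle is the handling of boundary terms due to the absence of terminal constraints: the endpoint value $\lambda(x^\star(N|t))$ is not controlled a priori, so one must use the turnpike conclusion to argue that $x^\star(N|t)$ lies in a neighborhood of $x_s$ whose radius vanishes as $N$ grows, and then transfer this to a quantitative bound on $\lambda(x^\star(N|t))$. Matching this vanishing boundary error against the strict-dissipativity gain $\alpha_\ell$ to close the descent inequality, and making the required horizon length $N$ explicit, is the delicate step; compactness of $\X$ together with continuity of $\lambda$ inherited from \eqref{eq:DI_str} underpins the quantitative estimates.
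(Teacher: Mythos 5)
Your plan follows essentially the same route as the paper, which itself only sketches the argument and defers the details to \cite{kit:faulwasser18c}: strict dissipativity yields the turnpike property via the rotated cost, local controllability at the interior steady state (from $n_x$-step reachability of the linearization) delivers both the recursive-feasibility construction and the candidate-input appendage, and a practical Lyapunov descent on the rotated value function gives the $\mcl{K}\mcl{L}$-bound. One small correction: continuity of $\lambda$ is \emph{not} inherited from \eqref{eq:DI_str} (Definition~\ref{def:DI} only requires $\lambda$ to be non-negative), and the paper explicitly treats continuity of the storage and of the rotated value function as \emph{additional} assumptions needed only for the sharper conclusion that $\rho\to 0$ as $N\to\infty$; for the theorem as stated, boundedness of $\lambda$ on the compact set $\X$ is what underpins the boundary-term estimates.
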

The proof of this result is based on the fact that the dissipativity property of  \OCP \eqref{eq:EMPC_OCP} implies turnpike properties of \OCP \eqref{eq:EMPC_OCP}; it can be found in \cite{kit:faulwasser18c}. Earlier versions presented in \cite{Gruene17a, Gruene13a} do not include the recursive feasibility statement. We remark that under additional continuity assumptions on the storage function $\lambda$ and on the rotated optimal value function one can show that the size of the neighborhood---i.e. $\rho$ in (ii)---converges to $0$ as $N$ goes to $\infty$, cf. \cite[Lem. 4.1 and Thm. 4.1]{kit:faulwasser18c}. Similar results can also be established for the continuous-time case \cite{epfl:faulwasser15g, kit:zanon18a}.

Observe that in Theorem \ref{thm:stab_EMPC}  it is required that $(x_s, u_s) \in \interior \left(\X\times\U\right)$, i.e. the  optimal steady state pair lies in the interior of the constraints. The reason for this requirement is that the recursive feasibility construction exploits local controllability,  cf. proof of Prop. 4.2 in \cite{kit:faulwasser18c}. 
If, however, one can ensure that with inputs $u(\cdot)\in \U$  system\eqref{eq:sys_def} is finite-time reachable on $\mcl{B}_\rho(x_s) \cap \X$ for some $\rho >0$---whereby $\mcl{B}_\rho(x_s)$  is an open ball of radius $\rho$ centered at $x_s$---, then one can relax  $(x_s, u_s) \in \interior \left(\X\times\U\right)$ to  $(x_s, u_s) \in \X\times\U$. This is summarized next.
 
\begin{corollary} \label{cor:stab_EMPC}
Let Assumption \ref{ass:EMPC} (i--ii) 
hold and suppose that $\X$ is compact. Moreover, suppose that  \eqref{eq:sys_def} is finite-time reachable on $\mcl{B}_\rho(x_s(d)) \cap \X$ for some $\rho >0$ and with inputs $u(\cdot)\in \U$.
Then the statements of Theorem \ref{thm:stab_EMPC} hold. \eBox
\end{corollary}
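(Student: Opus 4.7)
The plan is to follow the proof of Theorem \ref{thm:stab_EMPC} as given in \cite{kit:faulwasser18c}, identify exactly where Assumption \ref{ass:EMPC}(iii) and the interior condition $(x_s, u_s)\in\interior(\X\times\U)$ enter, and replace that single step by an appeal to the finite-time reachability hypothesis; every other step then transfers verbatim. The practical-stability bound in item (ii) rests on the turnpike property of \OCP \eqref{eq:EMPC_OCP} and on the rotated-cost construction, both of which are driven solely by strict dissipativity (Assumption \ref{ass:EMPC}(i)), compactness of $\X$, and exponential reachability (Assumption \ref{ass:EMPC}(ii)); none of these ingredients uses (iii) or the interior condition, so (ii) will follow for free once item (i) is established.

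In \cite[Prop. 4.2]{kit:faulwasser18c}, Assumption \ref{ass:EMPC}(iii) together with $(x_s, u_s)\in\interior(\X\times\U)$ serves a single purpose: to deduce \emph{local controllability} near $x_s$, i.e., for every state in some neighborhood of $x_s$ there is an admissible input sequence of length at most $n_x$ that steers it exactly to $x_s$ while all intermediate pairs remain in $\X\times\U$. This is obtained via a standard implicit-function argument from the $n_x$-step reachability of the Jacobian linearization, and the interior hypothesis is used precisely to absorb the linearization error into $\X\times\U$. The hypothesis of the corollary---finite-time reachability on $\mcl{B}_\rho(x_s)\cap\X$ with inputs in $\U$---delivers exactly this conclusion directly, bypassing the linearization and thus the interior condition.

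With local controllability available, I would then reassemble the recursive-feasibility argument as in \cite{kit:faulwasser18c}: by the turnpike property I choose $N$ large enough that any optimal open-loop trajectory starting in $\X_0$ enters $\mcl{B}_\rho(x_s)\cap\X$ sufficiently early within the prediction window; the finite-time reachability hypothesis then supplies an admissible tail that drives the state to $x_s$; and concatenating this tail with the one-step-shifted optimal input yields a feasible candidate for \OCP \eqref{eq:EMPC_OCP} at time $t+1$, which proves (i). Item (ii) is then inherited from \cite[Thm.~4.1]{kit:faulwasser18c}.

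The main obstacle is choosing $N$ uniformly over $\X_0$: one must guarantee that the turnpike entry time into $\mcl{B}_\rho(x_s)\cap\X$ plus the uniform finite-reachability time fits into the horizon for every admissible initial condition. This requires a uniform bound on turnpike entry times, which follows from compactness of $\X$ and the standard arguments in \cite{kit:faulwasser18c}, and a uniform finite-reachability time, which is implicit in the hypothesis (any pointwise-finite reachability becomes uniform on the compact set $\overline{\mcl{B}_\rho(x_s)\cap\X}$ provided $f$ and $\U$ are fixed). A secondary issue is that, unlike the original construction, the admissible tail may now touch the boundary of $\X\times\U$; this is harmless since $\X$ and $\U$ are closed and no strict interiority is needed anywhere else in the argument.
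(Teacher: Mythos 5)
Your proposal is correct and matches the paper's own (implicit) justification: the paper states the corollary without a formal proof, relying on the preceding remark that the interior condition and Assumption 2.1(iii) enter only through the local-controllability step of the recursive-feasibility construction in the cited reference, and that the finite-time reachability hypothesis on $\mcl{B}_\rho(x_s)\cap\X$ substitutes for it directly. Your additional remarks on uniformity of the reachability time and on horizon selection go beyond what the paper records but are consistent with the intended argument.
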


The application of Theorem \ref{thm:stab_EMPC} requires to verify the dissipativity properties of the underlying \OCP, cf. Definition \ref{def:DI}. Except for special cases---i.e. linear dynamics, quadratic stage costs cf. \cite{Gruene18a,Zanon14a}---the computation of storage functions $\lambda$ is as hard as the computation of Lyapunov functions \cite{epfl:faulwasser15h,Ebenbauer06a}. 
The next technical result reports an observation which is of interest in its own right and which will be helpful in verifying dissipativity properties in the context of this paper. 
\begin{definition}[Controlled forward invariant set]
A set $\tilde{\X} \subset \Rnx$ is said to be controlled forward invariant for system \eqref{eq:sys_def} and input constraint $\U$ if for all
$x \in \tilde{\X}$ there exists $u \in \U$ such that $f(x, u) \in \tilde{\X}$.   \eBox
\end{definition}

\begin{lemma}[Dissipativity on subsets of state constraints]\label{lem:DI_subset}
Let system \eqref{eq:sys_def} be (strictly) dissipative in the sense of Definition \ref{def:DI} (i--ii) on some compact constraint sets $\X$ and $\U$, where $\X$ is controlled forward invariant.
Then, for any subset $\tilde{\X}\subseteq \X$, which is controlled forward invariant with respect to $\tilde{\U}\subseteq\U$, system \eqref{eq:sys_def} is also (strictly) dissipative in the sense of Definition \ref{def:DI}  (i--ii) on  $\tilde{\X}$ and $\tilde\U$.   \eBox
\end{lemma}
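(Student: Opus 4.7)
The plan is to use the original storage function, restricted to the smaller constraint set, as a dissipativity certificate for \eqref{eq:sys_def} on $\tilde{\X}\times\tilde{\U}$. Concretely, I would take $\tilde\lambda := \lambda|_{\tilde{\X}}$, where $\lambda:\X\to\R$ is the non-negative function provided by the hypothesis. Non-negativity of $\tilde\lambda$ is inherited from that of $\lambda$ since $\tilde{\X}\subseteq\X$, so no new candidate function needs to be constructed.

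Next I would fix an arbitrary pair $(x,u)\in\tilde{\X}\times\tilde{\U}$ and use the inclusions $\tilde{\X}\subseteq\X$ and $\tilde{\U}\subseteq\U$ to conclude $(x,u)\in\X\times\U$. Evaluating the assumed dissipation inequality \eqref{eq:DI_non_str} at this pair then yields exactly the dissipation inequality required by Definition \ref{def:DI}(i) on the restricted sets, now read with $\tilde\lambda$ in place of $\lambda$. The strict case is handled identically: the same class-$\mcl{K}_\infty$ function $\alpha_\ell$ and the same storage function satisfy \eqref{eq:DI_str} on $\tilde{\X}\times\tilde{\U}$, because the strict inequality holds pointwise on the larger set and the strictness margin $\alpha_\ell(\|x-x_s\|)$ depends only on $x$ and the reference state, not on which constraint set is declared. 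Since the reference pair $(x_s,u_s)$ enters solely through the constant $\ell(x_s,u_s)$, no re-selection is needed; implicitly I assume $(x_s,u_s)\in\tilde{\X}\times\tilde{\U}$ so that the assertion of dissipativity on the subset remains interpretable, which is consistent with $x_s$ being a steady state and $\tilde{\X}$ being controlled forward invariant.

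The only role of the controlled forward invariance hypothesis is to ensure that the restricted setting is not vacuous: for each $x\in\tilde{\X}$ there exists some $u\in\tilde{\U}$ with $f(x,u)\in\tilde{\X}\subseteq\X$, so at least one admissible successor state lies in the domain of $\tilde\lambda$ and the inequality becomes a constraint on genuinely realizable trajectories rather than on a pair with no meaningful evolution. I do not foresee a substantive obstacle — the lemma is essentially a monotonicity-in-the-constraint-sets observation — and the conceptual content is in pinning down the minimal hypothesis on $\tilde{\X}$ (namely, controlled forward invariance with respect to $\tilde{\U}$) under which the dissipativity certificate transfers without modification. This is precisely the form needed later when \OPF-induced state and input tightenings are considered, where one wishes to inherit dissipativity from an a~priori verification on a larger, analytically convenient constraint set.
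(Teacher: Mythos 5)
Your proof is correct, but it takes a genuinely different route from the paper. You argue directly: the storage function $\lambda$ and the margin $\alpha_\ell$ certifying (strict) dissipativity on $\X\times\U$ certify it on any $\tilde{\X}\times\tilde{\U}\subseteq\X\times\U$ simply because the pointwise inequalities \eqref{eq:DI} are universally quantified over the larger set. The paper instead invokes the Willems characterization of dissipativity via boundedness of the available storage $\lambda_a$ (the optimal value of a free-end-time \OCP over admissible trajectories) and argues by contradiction: failure of dissipativity on $(\tilde{\X},\tilde{\U})$ would produce a point $\tilde x\in\tilde{\X}\subseteq\X$ with unbounded available storage, contradicting dissipativity on $(\X,\U)$, since every $(\tilde{\X},\tilde{\U})$-admissible trajectory is $(\X,\U)$-admissible. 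Your argument is more elementary and buys something the paper's does not: it is constructive, exhibiting the \emph{same} storage function and the \emph{same} $\mcl{K}_\infty$ margin on the subset, which is exactly what is exploited later when a quadratic/linear storage is carried from $\X\times\U$ down to $(\X\times\U)\cap h^{-1}(\Y(d))$ in Proposition \ref{prop:dissOPF}. The paper's route, being trajectory-based, sidesteps the one technical wrinkle in yours: by restricting the domain to $\tilde\lambda=\lambda|_{\tilde{\X}}$, the term $\tilde\lambda(f(x,u))$ is undefined whenever $f(x,u)\notin\tilde{\X}$, and controlled forward invariance only guarantees \emph{one} input per state keeping the successor in $\tilde{\X}$, not all $u\in\tilde{\U}$; the clean fix is simply to keep $\lambda$ defined on all of $\X$ (or wherever $f$ maps $\X\times\U$) rather than restricting it, after which your pointwise argument is airtight and consistent with how the paper itself reads Definition \ref{def:DI}. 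Your side remark that $(x_s,u_s)$ enters only through the constant $\ell(x_s,u_s)$, so no re-selection of the reference pair is needed, is also correct and is implicit in the paper's treatment.
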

\begin{proof}
Recall that, for any supply rate $s:\X\times\U\to\R$, dissipativity can equivalently be characterized in terms of the available storage; i.e. in terms of the boundedness of the optimal value function $\lambda_a:\X\to\R$ of the following free-end-time \OCP
\begin{subequations} \label{eq:OCP_DI}
\begin{align}
    \lambda_a(x) = \sup_{u(\cdot), N} &\sum_{k = 0}^{N-1} -s(x(k),u(k))   \\
    \text{subject to}&  \nonumber\\
     x(k+1) &= f(x(k), u(k)), \quad x(0) = x, \quad &k\in \mathbb{I}_{[0,N-1]}\phantom{.}   \\
   \begin{bmatrix}x(k) &u(k)\end{bmatrix}^\top&\in \X\times\U,  \hspace*{3.05cm}	& k\in \mathbb{I}_{[0,N-1]}.  
\end{align}
\end{subequations}
More precisely, system \eqref{eq:sys_def} is dissipative on $\X\times\U$ if and only if $\lambda_a(x) \leq \bar \lambda < \infty$ for all $x \in \X$, see \cite{Willems72a} or \cite{Mueller14a} for details. 

Suppose that system \eqref{eq:sys_def} is not (strictly) dissipative on $\tilde{\X}$ and $\tilde{\U}$. Then, since $\tilde{\mathbb{X}}$ is controlled forward invariant, there exists $\tilde x \in \tilde{\X}$ and a sequence $\tilde{u}(\cdot)\in \tilde\U$ such that $\lambda_a(\tilde x) = \infty$. Since $\tilde{\X}\subseteq \X$ it is clear that  $\tilde x \in {\X}$. Hence $\lambda_a(\tilde x) = \infty$ contradicts (strict) dissipativity of  \eqref{eq:sys_def} on $\X$ and $\U$. 
\end{proof}
We remark that one can easily extend the previous result to the case of a dissipative \OCP whereby the initial conditions are restricted to a controlled forward invariant set, while the complete state constraint $\X$ does not need to be controlled invariant.

\section{Multistage AC Optimal Power Flow} \label{sec:problem}
The present paper aims at transferring the dissipativity-based framework for economic \NMPC summarized in Theorem \ref{thm:stab_EMPC} to generator dispatch problems (or multi-stage  \OPF problems) arising in power systems. To this end and similar to \cite[Sec. 2]{kit:faulwasser18d}, we consider balanced electrical \AC grids at steady state modeled by  $(\mathcal{N},\mathcal{G},\mathcal{S},Y)$, where $\mathcal{N}=\{1, \hdots, N\}$ is the set of buses (nodes), $\mathcal{G} \subseteq \mathcal{N}$ is the non-empty set of generators, $\mathcal{S} \subseteq \mathcal{N}$ is the  set of storages/batteries, and $Y = G + \imag B \in \mathbb{C}^{N\times N}$ is the bus admittance matrix \cite{Grainger1994}. The off-diagonal entries of ${Y}$ can be written as $\textcolor{black}{-}y_{lm}= {g}_{lm} + \imag{b}_{lm}$, whereby ${g}_{lm}$ is the conductance for the line $lm$, respectively, ${b}_{lm}$ is the line susceptance. The diagonal entries of $Y$ are $y_{ll} = y_l + \sum_{l\not = m} y_{lm}$, where $y_l$ accounts for the ground (shunt) admittance connected to bus $l$.

Implicitly, we assume symmetric three-phase \AC conditions.  Thus, every bus $l \in \mathcal{N}$ is described by its voltage phasor $v_l \mathrm{e}^{k \theta_l} \in \mathbb{C}$ and net apparent power $s_l = p_l + j q_l \in \mathbb{C}$, or equivalently by its voltage magnitude $v_l$, voltage phase $\theta_l$, net active power $p_l$, and net reactive power $q_l$.

In the considered setting the steady-state behavior of the power grid is described by the so-called power flow equations 
\begin{subequations}
	\label{eq:powerflow}
	\begin{align}
	p_l = &\;v_l\sum_{m \in \mathcal{N}} v_m\left(G_{lm}\cos(\theta_{lm})+B_{lm}\sin(\theta_{lm})\right),\\
	q_l = &\;v_l\sum_{m \in \mathcal{N}} v_m\left(G_{lm}\sin(\theta_{lm})-B_{lm}\cos(\theta_{lm})\right),
	\end{align}
\end{subequations}
where the shorthand notation $\theta_{lm} := \theta_l -\theta_m$ is applied.
Observe that in the  power flow equations~\eqref{eq:powerflow} the phase angles $\theta_l$ occur as pair-wise differences, therefore one bus $l_0\in \mathcal{N}$ is specified as so-called reference (slack) bus 
$	\theta_{l_0} = 0$ for $	l_0 \in \mcl{N}$; w.l.o.g. we will assume $l_0 = 1$ in the remainder.

For the sake of simplicity, we consider only one generator per bus (i.e. $\mcl{G} \subseteq \mcl{N}$), respectively, one storage per bus (i.e. $\mcl{S} \subseteq \mcl{N}$).  
We describe the net apparent power of bus $l \in \mathcal{N}$ by
\begin{equation*}
s_l = p_l + \imag q_l =
 p_l^{\text{g}} +  p_l^{\text{s}} - p_{l}^{\text{d}} + \imag \left( q_{l}^{\text{g}} + q_{l}^{\text{s}} - q_{l}^{\text{d}}\right) 
\end{equation*}
where $p_{l}^{\text{g}}$, $q_{l}^{\text{g}}$ are controllable power injections for all generator nodes $l \in \mathcal{G}$, $p_{l}^{\text{s}}$, $q_{l}^{\text{s}}$ are controllable power injections for all storage nodes $l \in \mathcal{S}$, and $p_{l}^{\text{d}}$, $q_{l}^{\text{d}}$ are uncontrollable power sinks/sources for all $l \in \mathcal{N}$. If $l\not\in\mcl{S}$ then $p_{l}^{\text{s}}=0$, $q_{l}^{\text{s}}=0$, and if $l\not\in\mcl{G}$ then $p_{l}^{\text{g}}=0$, $q_{l}^{\text{g}}=0$.  

For the sake of compact notation, we define the auxiliary variable $\nodVar \in \mathbb{R}^{n_{\nodVar}}$, the disturbance $\nodDist \in \mathbb{R}^{n_{\nodDist}}$, and the (algebraic) state $\nodAlgstate \in \mathbb{R}^{n_{\nodAlgstate}}$ as follows
\begin{subequations}
\begin{align} 
\nodVar &=	\begin{bmatrix} p_{l}^{\text{g}} & q_{l}^{\text{g}} &  p_{l}^{\text{s}} & q_{l}^{\text{s}}\end{bmatrix}_{l\in\mcl{G}\cup\mcl{S}}^\top \in \mbb{R}^{n_{\nodVar}},  && n_{\nodVar} = 2 | \mathcal{G} | + 2|\mcl{S}|, \label{eq:NodVar} \\
\nodDist &=	\begin{bmatrix} p_{l}^{\text{d}}& q_{l}^{\text{d}} \end{bmatrix}^\top_{l\in\mcl{N}}  \in \mbb{R}^{n_{\nodDist}}, && n_{\nodDist} = 2 | \mcl{N} |, \label{eq:NodDist}\\
\nodAlgstate &= \begin{bmatrix}v_l & \theta_l \end{bmatrix}_{l\in\mcl{N}}^\top \in \mbb{R}^{n_{\nodAlgstate}},  && n_{\nodAlgstate} = 2 | \mcl{N} |. \label{eq:NodAlgstate} 
\end{align}
The variable $\nodVar$ collects all active and reactive power injections that can be controlled/ manipulated; $\nodDist$ is the vector of uncontrolled loads (injections of renewables or demands); and $\nodAlgstate$ collects the phase angles and voltage magnitudes at all buses. 
The active power injections of the generators are typically subject to ramp constraints, i.e.  $ |p_{l}^{\text{g}}(k) -  p_{l}^{\text{g}}(k-1)| \leq \overline{\delta p}_l^{\text{g}}$. In other words,
the generators are subject to the simple linear dynamics (with input constraints)
\begin{align*}
p_{l}^{\text{g}}(k+1) &= p_{l}^{\text{g}}(k) + \delta p_{l}^{\text{g}}(k),  \quad l \in \mcl{G}, \quad  |\delta p_{l}^{\text{g}}(k)| \leq \overline{\delta p}_l^{\text{g}} \\
q_{l}^{\text{g}}(k+1) &= q_{l}^{\text{g}}(k) + \delta q_{l}^{\text{g}}(k), \quad l \in \mcl{G}.
\end{align*}
Note that while the incremental active power injections, $\delta p_{l}^{\text{g}}(k)$, are typically constrained, the incremental reactive power injections, $\delta q_{l}^{\text{\textcolor{blue}{s}}}(k)$, are not subject to constraints.

 Moreover, the storages are of limited capacity; i.e. they are subject to the discretized dynamics
  \[
   e_{l}(k+1) =  e_{l}(k) + \Delta\cdot p_{l}^{\text{s}}(k),\quad l\in\mcl{S},
  \]
  where $e_l(k) \in [0, \overline{e}_l]$ is the state of charge of the storage $l$ at time $k$ and $\Delta>0$ refers to the sampling period. Observe that, for the sake of simplicity, we do not consider charging losses. Moreover, note that the reactive power injections of a battery storage, $q_{l}^{\text{s}}$,  have no effect on the state of charge.  To compactly summarize these dynamics we introduce the following state variable
\begin{align}
\nodState =  \begin{bmatrix}\begin{bmatrix} p_{l}^{\text{g}} & q_{l}^{\text{g}} \end{bmatrix}_{l\in\mcl{G}} &\begin{bmatrix}  e_{l}\end{bmatrix}_{l\in\mcl{S}}\end{bmatrix}^\top \in \mbb{R}^{n_{\nodState}},  && n_{\nodState} = 2 | \mathcal{G} | + |\mcl{S}|. \label{eq:NodState} 
\end{align}
The inputs $\nodInput$ driving the state variables are: the increments of active and reactive generator powers for generator buses ($\delta p_l^{\text{g}}, \delta q_l^{\text{g}}, l\in\mcl{G}$) and the active charging/discharging power for storage buses and the reactive power  provided by storage buses ($p_{l}^{\text{s}}, q_{l}^{\text{s}},l \in \mcl{S}$); i.e.
\begin{align}
\nodInput = \begin{bmatrix}\begin{bmatrix} \delta p_{l}^{\text{g}} & \delta q_{l}^{\text{g}} \end{bmatrix}_{l\in\mcl{G}} & \begin{bmatrix}  p_{l}^{\text{s}} & q_{l}^{\text{s}}\end{bmatrix}_{l\in\mcl{S}}\end{bmatrix}^\top \in \mbb{R}^{n_{\nodInput}},  && n_{\nodInput} = 2 | \mathcal{G} | + 2|\mcl{S}|. \label{eq:NodInput} 
\end{align}
\end{subequations}
Summing up, the dynamics of $\nodState$ are given by 
\begin{subequations} \label{eq:sys_OPF}
\begin{align}
\nodState(k+1) &= A\nodState(k) + B\nodInput(k),\\
\nodVar(k) & =C\nodState(k) + D\nodInput(k) \label{eq:sys_OPF_output}
\end{align}
where 
\begin{equation}
\begin{aligned}
A &= I^{n_\nodState}, \, 
&B &=
\begin{bmatrix}
 I^{2|\mcl{G}|\phantom{\times 2|\mcl{G}|}} & 0^{2|\mcl{G}|\times 2|\mcl{S}|}\\
0^{|\mcl{S}|\times 2|\mcl{G}|} &   ( \Delta\cdot I^{|\mcl{S}|}\; 0^{|\mcl{S}|})
\end{bmatrix}
\,, \\
C &= 
\begin{bmatrix}
I^{2|\mcl{G}|\phantom{\times 2|\mcl{G}|}} & 0^{{2|\mcl{G}|\times}|\mcl{S}|}\\
 0^{2|\mcl{S}|{\times 2|\mcl{G}|}} &  0^{2|\mcl{S}|\times |\mcl{S}|}
\end{bmatrix},
\, &D &= \diag\left(0^{2|\mcl{G}|},~  I^{2|\mcl{S}|}\right).
\end{aligned}
\end{equation}
\end{subequations}
Recall that the reactive power injections provided by the storages, $q_{l}^{\text{s}},l \in \mcl{S}$, do not directly influence the state of charge $e_l$. Thus in the rectangular matrix $B$ the block $\left(\Delta\cdot I^{|\mcl{S}|},~ 0^{|\mcl{S}|}\right)$ appears.

It deserves to be noted that the input, state, and auxiliary variables are typically subject to box constraints, i.e. 
\begin{align*}
 \nodInput &\in  \U,\quad \nodState\in\X \quad  \nodAlgstate \in \Z,\\
\U&:=\bigtimes_{l\in\mcl{G}\cup\mcl{S}} ~\left[-\overline{\delta p}_l^{\text{g}},\, \overline{\delta p}_l^{\text{g}}\right] \times \R\times \left[\underline{p}_l^{\text{s}},\, \overline{p}_l^{\text{s}}\right]\times  \left[\underline{q}_l^{\text{s}},\, \overline{q}_l^{\text{s}}\right]
\end{align*}
 Furthermore, our choice of variables allows writing the power flow equations~\eqref{eq:powerflow} in terms of a system of nonlinear algebraic equations
\begin{equation}
\label{eq:PFE_compact}
F: \mbb{R}^{n_{\nodVar}} \times \mbb{R}^{n_{\nodAlgstate}} \times \mbb{R}^{n_{\nodDist}} \rightarrow \mbb{R}^{2 |\mcl{N}|} \qquad  F(\nodVar,\nodAlgstate; \nodDist) = 0,
\end{equation}
where the semicolon notation emphasizes the dependency on the exogenous disturbance $\nodDist$.

Now we are ready to formulate the multi-stage Optimal Power Flow (\OPF) problem as a discrete-time \OCP as follows:
\begin{subequations} \label{eq:OPF_OCP_naive}
    \begin{align}
    \min_{\nodInput(\cdot|t)} &\sum_{k = 0}^{N-1} \ell(\nodState(k|t), \nodInput(k|t))  \\
    \text{subject to}&  \nonumber\\
     \nodState(k+1|t) &= A\nodState(k|t) + B\nodInput(k|t)), \; \nodState(0|t) = \nodState(t),\; &k\in \mathbb{I}_{[0,N-1]} , \\
     \nodVar(k|t) & =C\nodState(k|t) + D\nodInput(k|t),  \hspace*{1cm} &k\in \mathbb{I}_{[0,N-1]} , \\
     0 & = F(\nodVar(k|t),\nodAlgstate(k|t); \nodDist(k|t)),  \hspace*{1cm}&k\in \mathbb{I}_{[0,N-1]}  , \label{eq:OPF_OCP_PF}\\
    \left[\nodState(k|t) \;\nodInput(k|t)\;\nodVar(k|t)\; \nodAlgstate(k|t)\right ]^\top&\in \X\times\U\times\Rny\times\Z,  \hspace*{1cm}	 &k\in \mathbb{I}_{[0,N-1]} . 
  \end{align}
  \end{subequations}
Observe that the stage cost does not depend on  $\nodAlgstate$ since it only penalizes economic costs related to $\nodState$ and $\nodInput$ (i.e. active power generation). Moreover, note that while the dynamics are a linear system with feed-through, the \AC power flow equations appear in \eqref{eq:OPF_OCP_PF}. In other words, 
the dynamic constraints of this \OCP are given by implicit difference equations (i.e. the discrete-time counterpart to differential algebraic equations).  We remark that the implicit dynamics above are \textit{not} a discretization of the power-swing differential-algebraic equations. Indeed for the most part the mulit-stage OPF problem is a quasi-stationary one, hence the  implicit dynamics above can be regarded as a quasi-stationary model. 

While locally one may employ the implicit function theorem to solve the implicit equations  \eqref{eq:OPF_OCP_PF} for $\nodAlgstate$,\footnote{Given $\nodDist_0$, $\nodVar_0$ and $\nodAlgstate_0$ with $F(\nodVar_0,\nodAlgstate_0; \nodDist_0)=0$ and $\nodDist$, $\nodVar$ close to $\nodDist_0$, $\nodVar_0$ one may solve \eqref{eq:PFE_compact} for $\nodAlgstate$. Note that due to the fact that the power flow equations are formulated in terms of phase-angle difference $\theta_{ml}$ a rank-deficient Jacobian occurs, which has to be taken care of (e.g. defining the phase at  the slack bus as $\theta_{l_0}=0$).} it is known in power engineering that 
in general the power-flow equations may have non-unique solutions \cite{Nick18}. Hence a rigorous and direct analysis of \OCP \eqref{eq:OPF_OCP_naive} would require to deal with the tedious technicalities of set-valued implicit dynamics. 

Here, however, we propose a different approach. Observe that the algebraic state $\nodAlgstate$ does not appear in the stage cost nor does it directly influence the dynamics of $\nodState$.  
Hence the constraints $F(\nodVar,\nodAlgstate; \nodDist)=0$ and $\nodAlgstate\in\Z$ can be expressed more conveniently via the non-convex set
\begin{equation}\label{eq:Yset}
\Y(\nodDist) := \left\{\nodVar \in \R^{n_\nodVar}\,|\, F(\nodVar,\nodAlgstate; \nodDist)=0 \quad \nodAlgstate\in\Z\right\}.
\end{equation}
Note that $\Y(\nodDist)$ can be understood as a subset of the projection of the so-called power-flow manifold---i.e. the solution set to the power flow equations \eqref{eq:powerflow}---onto $\R^{n_\nodVar}$.
 This allows reformulating \OCP \eqref{eq:OPF_OCP_naive} from above as:
 \begin{subequations} \label{eq:OPF_OCP}
    \begin{align}
    V_N(\nodState(t), t) = \min_{\nodInput(\cdot|t)} &\sum_{k = 0}^{N-1} \ell(\nodState(k|t), \nodInput(k|t))  \\
    \text{subject to}&  \nonumber\\
     \nodState(k+1|t) &= A\nodState(k|t) + B\nodInput(k|t)), \quad \nodState(0|t) = \nodState(t),\quad &k\in \mathbb{I}_{[0,N-1]}, \\
     \nodVar(k|t) & =C\nodState(k|t) + D\nodInput(k|t), \hspace*{2.75cm}  &k\in \mathbb{I}_{[0,N-1]}, \\
    \begin{bmatrix}\nodState(k|t) &\nodInput(k|t)&\nodVar(k|t)\end{bmatrix}^\top&\in \X\times\U\times\Y(\nodDist(k|t)),  \hspace*{2.75cm}	  &k\in \mathbb{I}_{[0,N-1]}. \label{eq:OPF_OCP_constraints}
  \end{align}
  \end{subequations}
  
  Subsequently, we will refer to the  \OCP \eqref{eq:OPF_OCP} as the multi-stage \AC-\OPF problem. Moreover, we will analyze under which conditions \eqref{eq:OPF_OCP} satisfies the assumptions of Theorem \ref{thm:stab_EMPC}.

\section{Analysis of NMPC for multi-stage AC-OPF} \label{sec:results}
In analyzing the multi-stage \AC-\OPF \OCP \eqref{eq:OPF_OCP} it is crucial to observe that the (in general time-varying) disturbance $\nodDist(\cdot|t)$ enters the problem in the constraint \eqref{eq:OPF_OCP_constraints}. However, note that the conditions of Theorem \ref{thm:stab_EMPC} as such do not allow for time-varying problem data and/or constraints. Here, we present an analysis that exploits the fact that the disturbance $\nodDist(\cdot|t)$ \textit{only} enters \OCP \eqref{eq:OPF_OCP} through the constraint \eqref{eq:OPF_OCP_constraints}. As we will show it is this particular structure that enables conclusions about the receding-horizon solution to  \OCP \eqref{eq:OPF_OCP}. To this end, we consider in a first step the case that the disturbance $\nodDist$ changes only occasionally---i.e.  $\nodDist(t)\equiv const.$ for almost all $t\in\N$. In a second step we investigate the case that the disturbance $\nodDist(t)$ changes in each time-step sufficiently slowly.\footnote{We remark that in principle one could as well apply recent results on time-varying turnpike properties of \OCPs here \cite{Gruene17b, Gruene18b}. However, without prior knowledge of the disturbance sequence $\nodDist(\cdot)$ these conditions are hard to check. Thus we leave this point for future work.}

We begin our analysis with recalling obvious properties of the dynamics \eqref{eq:sys_OPF}.
\begin{lemma}[Invariance and reachability] \label{lem:invariance}
Consider system  \eqref{eq:sys_OPF} for some compact state constraint set $\X \subseteq \Rnx$.
Then the following statements hold:
\begin{enumerate}[label=(\roman*)]
\item For any non-empty set $\U$ with $0\in \inte\U$, any subset of the state constraint set $\X$ of system   \eqref{eq:sys_OPF} is controlled forward invariant.
\item If $\bar \X\subseteq \X$ is a path-connected subset of $\X$, then for any non-empty set $\U$ with $0\in \inte\U$ and any $x_0, x_1 \in \bar\X$, there exists a  feasible control sequence $u(k; x_0, x_1) = \{u(0), \dots, u(M-1)\}, u(k) \in \inte\U$ of finite length $M$ such that 
\begin{align*}
x(k; x_0, u(\cdot; x_0, x_1)) &\in \bar\X \quad \forall k \in \mathbb{I}_{[0, M]} \\
x(M; x_0, u(\cdot; x_0, x_1) &= x_1 .
\end{align*}
\item For $\U =\Rnu$ and any compact set $\X$,   any $x_1\in\X$ is 1-step reachable from any $x_0\in\X$.\eBox
\end{enumerate}
\end{lemma}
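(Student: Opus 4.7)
The plan is to exploit two very simple structural properties of the dynamics \eqref{eq:sys_OPF}, namely that $A = I^{n_\nodState}$ and that $B$ has full row rank equal to $n_\nodState$. The latter follows directly from inspection: the generator block is $I^{2|\mcl{G}|}$ and the storage block is $(\Delta \cdot I^{|\mcl{S}|},\, 0^{|\mcl{S}|})$, so every row of $B$ contains either a $1$ or a nonzero $\Delta$ in a position where no other row has a nonzero entry.

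Given this, part (i) is immediate: since $A=I^{n_\nodState}$ and $0 \in \inte\U$, the input $u \equiv 0$ is feasible and keeps the state fixed. Hence every point of any subset $\tilde{\X}\subseteq\X$ admits a feasible input sending it back into $\tilde{\X}$. Part (iii) is just as short: with $\U = \Rnu$, full row rank of $B$ means that for any $\nodState_0,\nodState_1 \in \X$ the linear system $B u = \nodState_1 - \nodState_0$ is solvable (for instance by the minimum-norm pseudoinverse $u = B^\top(BB^\top)^{-1}(\nodState_1-\nodState_0)$), and then $\nodState_1 = A\nodState_0 + Bu$ in a single step.

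The bulk of the work is part (ii). I would first invoke path-connectedness of $\bar\X$ to obtain a continuous path $\gamma:[0,1]\to\bar\X$ with $\gamma(0)=x_0$, $\gamma(1)=x_1$. By compactness of $[0,1]$, $\gamma$ is uniformly continuous, so for any $\epsilon>0$ I can choose $M\in\N$ and a uniform partition $0=t_0<t_1<\cdots<t_M=1$ with $\|\gamma(t_{k+1})-\gamma(t_k)\|<\epsilon$ for all $k$. Setting $x(k):=\gamma(t_k)\in\bar\X$ and
\[
u(k) := B^\top (BB^\top)^{-1}\bigl(\gamma(t_{k+1}) - \gamma(t_k)\bigr), \quad k \in \I{0}{M-1},
\]
the dynamics $x(k+1)=A x(k)+B u(k)$ are satisfied by construction, and all state iterates lie in $\bar\X$.

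The only point requiring a small argument is that the inputs actually lie in $\inte\U$. Since $0\in\inte\U$, there is some open ball $\mathcal{B}_\delta(0)\subset\U$, and by the above bound $\|u(k)\| \le \|B^\top(BB^\top)^{-1}\|\cdot\epsilon$. Choosing $\epsilon$ small enough so that $\|B^\top(BB^\top)^{-1}\|\cdot\epsilon<\delta$ forces $u(k)\in\inte\U$ uniformly in $k$. I expect this to be the only step with any substance; the rest is pure linear algebra driven by $A=I$ and the full row rank of $B$.
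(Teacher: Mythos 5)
Your proof is correct and follows essentially the same route as the paper's: part (i) via $A=I$ and $u=0$, part (ii) by discretizing a continuous path in $\bar{\X}$ and inverting $B$ with the Moore--Penrose pseudoinverse while shrinking the step size so the inputs stay in a ball inside $\U$, and part (iii) from full row rank of $B$. The only (harmless) divergence is that you prove (iii) directly in one step rather than citing (ii), which is arguably cleaner since (ii) by itself only yields an $M$-step transfer.
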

The proof follows without difficulties from the fact that in the multi-stage \OPF dynamics \eqref{eq:sys_OPF} we have  $A = I^{n_\nodState}$ and  $\rank(B) = {n_\nodState}$. For the sake of completeness it is given in the appendix.

In power systems, specifically in the context of \OPF problems, the considered objective functions are typically quadratic. Hence, we consider the following assumption.
\begin{assumption}[Quadratic stage costs] \label{ass:lq_cost}
The stage cost $\ell:\X\times\U\to\R$ is given by
\begin{align}\label{eq:lq_cost}
\ell(x,u) = x^\top Q x + u^\top R u + q^\top x + r^\top u. \vspace*{-.65cm}
\end{align}
\eBox
\end{assumption}
Recall that in  \eqref{eq:sys_OPF} the output equation reads $\nodVar = C\nodState + D\nodInput =: h(\nodState, \nodInput)$. Consider the set
\begin{equation} \label{eq:Yinvset}
 h^{-1}(\Y(\nodDist)) :=  \left\{[\nodState~\nodInput]^\top\,|\, \nodVar = C\nodState + D\nodInput \in \Y(\nodDist) \right\},
\end{equation}
which is the pre-image of $\Y(\nodDist)$ from \eqref{eq:Yset} with respect to the output equation \eqref{eq:sys_OPF_output}.
\begin{assumption}[Unconstrained optimal steady-state solutions] \label{ass:opf}
For all disturbances $\nodDist \in \D$ the steady-state minimizer
\begin{subequations}\label{eq:xsdusd}
\begin{align} 
(x_s(d),  u_s(d)) = \arg\min_{x, u}&\,\ell(x, u) \\
 \text{subject to }& \nonumber \\
x &= Ax+Bu\\
  \begin{bmatrix} x & u\end{bmatrix}^\top &\in (\X\times\U)\cap h^{-1}(\Y(\nodDist)).
\end{align}
\end{subequations}
exists and satisfies $(x_s(d), u_s(d)) \in\interior\left( (\X\times\U)\cap h^{-1}(\Y(\nodDist))\right)$. \eBox
\end{assumption}
As such the condition that $(x_s(d), u_s(d)) \in\interior\left( (\X\times\U)\cap h^{-1}(\Y(\nodDist))\right)$ is quite restrictive for real-world \OPF problems. Hence we will later comment on relaxing it.

\subsection{Constant Disturbances}
\begin{proposition}[Dissipativity of multistage \AC-\OPF \eqref{eq:OPF_OCP}] \label{prop:dissOPF}
Suppose Assumptions \ref{ass:lq_cost} and \ref{ass:opf} hold and let $\nodDist(\cdot)\in\D$ with $\nodDist(t) \equiv const$.

Then  system \eqref{eq:sys_OPF} is strictly dissipative with respect to $(\nodState_s(d), \nodInput_s(d)) \in  \interior $ $\Large(\left(\X\times\U\right)  \cap h^{-1}(\Y(d))\Large)$ and storage  $\lambda(x) = x^\top P x + p^\top x$ if and only if
  $Q >0$. \eBox
\end{proposition}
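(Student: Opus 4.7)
My plan exploits two structural facts of the OCP in~\eqref{eq:OPF_OCP}: $A = I^{n_\nodState}$ (so $f(x,u) = x + Bu$) and $Bu_s(d) = 0$ (forced by the steady-state equality $x = Ax + Bu$), together with the interior condition in Assumption~\ref{ass:opf}. Under that assumption $(x_s(d), u_s(d))$ is a KKT point of \eqref{eq:xsdusd}, and the only active equality is $Bu = 0$; since $\rank(B) = n_\nodState$ by the structure recorded in the proof of Lemma~\ref{lem:invariance}, LICQ holds, and stationarity of the Lagrangian $L(x,u,\mu) = \ell(x,u) - \mu^\top Bu$ yields
\[
2 Q x_s(d) + q = 0, \qquad B^\top \mu = 2 R u_s(d) + r,
\]
with a unique multiplier $\mu \in \Rnx$. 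I would take the storage $\lambda(x) = p^\top x$ with $p := \mu$ (i.e.\ $P = 0$).

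For sufficiency I would form the rotated stage cost
\[
\tilde\ell(x,u) := \ell(x,u) - \ell(x_s,u_s) + \lambda(x) - \lambda(f(x,u)) = \ell(x,u) - \ell(x_s,u_s) - p^\top B u,
\]
substitute $x = x_s + \xi$, $u = u_s + \nu$, and use $Bu_s = 0$ together with the two stationarity identities. All linear terms in $(\xi,\nu)$ cancel, leaving $\tilde\ell(x,u) = \xi^\top Q \xi + \nu^\top R \nu$. If $Q > 0$ then $\xi^\top Q \xi \geq \lambda_{\min}(Q)\,\|x - x_s\|^2$, which delivers the strict dissipation inequality \eqref{eq:DI_str} with $\alpha_\ell(s) = \lambda_{\min}(Q)\, s^2$ on any subset of $\X\times\U$ where the $\nu$-term is non-negative. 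Lemma~\ref{lem:DI_subset} then transports the statement from $\X\times\U$ down to the OPF feasible set $(\X\times\U)\cap h^{-1}(\Y(d))$ that underlies \OCP~\eqref{eq:OPF_OCP}.

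For necessity I would exploit that, since $A = I^{n_\nodState}$ and $Bu_s = 0$, the identity $f(x,u_s) = x$ holds for every $x$. Consequently, for any storage of the postulated form, the left-hand side of \eqref{eq:DI_str} at $u = u_s$ vanishes identically, and strict dissipativity collapses to $\ell(x, u_s) - \ell(x_s, u_s) \geq \alpha_\ell(\|x - x_s\|)$ for all $x \in \X$. Because $x_s \in \interior\X$ by Assumption~\ref{ass:opf}, any direction $\xi \in \Rnx$ is admissible via $x = x_s + \epsilon \xi$ for small $\epsilon > 0$; expanding the stage cost and letting $\epsilon \downarrow 0$ first forces $(2Qx_s + q)^\top \xi = 0$ for every $\xi$ (hence $2Qx_s + q = 0$, consistent with the stationarity derived above), and then forces $\xi^\top Q \xi \geq c \,\|\xi\|^2$ for some $c > 0$, i.e.\ $Q > 0$.

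The main obstacle is the $\nu^\top R \nu$ contribution in the rotated cost during sufficiency: if $R$ is indefinite on a direction $\nu$ compatible with the feasibility constraint $h^{-1}(\Y(d))$, then $P = 0$ no longer yields a pointwise lower bound. One then has to either exploit compactness of $\U$ together with the implicit coupling $C\xi + D\nu \in T_{y_s}\Y(d)$ near $(x_s,u_s)$, which restricts $\nu$ to a subspace on which $R$ is effectively PSD, or pick a nonzero $P$ so that the Schur-complement LMI
\[
\tilde M = \begin{bmatrix} Q & -PB \\ -B^\top P & R - B^\top P B \end{bmatrix} \succeq \diag(\alpha I^{n_\nodState},\, 0),\quad \alpha > 0,
\]
becomes feasible; here the rank condition $\rank(B) = n_\nodState$ from Lemma~\ref{lem:invariance} is what guarantees that $P$ has enough leverage on the $(2,2)$-block to compensate $R$ after $Q>0$ is fixed.
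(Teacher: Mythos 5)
Your argument follows essentially the same route as the paper's proof: rotate the stage cost with a storage of the postulated quadratic/linear form, use $A=I^{n_\nodState}$ to reduce the positivity requirement on the quadratic-in-$x$ part to $Q>0$, and invoke Lemma~\ref{lem:DI_subset} to pass from $\X\times\U$ down to $(\X\times\U)\cap h^{-1}(\Y(\nodDist))$. What you add is genuinely useful: the explicit choice $p=\mu$ from the KKT multiplier of \eqref{eq:xsdusd} (legitimate, since under Assumption~\ref{ass:opf} the only active constraint is $Bu=0$ and $\rank B=n_x$ gives LICQ), the exact cancellation $\tilde\ell(x,u)=\xi^\top Q\xi+\nu^\top R\nu$, and a cleaner necessity argument --- since $Bu_s=0$ and $A=I$, one has $f(x,u_s)=x$, so the storage drops out entirely and $Q>0$ is forced by the stage cost alone, for \emph{any} storage function and not only quadratic ones. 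The paper's proof argues necessity from ``$\tilde\ell(\cdot,u_s)$ must have a strict minimum'' and leaves sufficiency largely implicit behind the citation of Gr\"une et al.; your version makes both directions concrete.

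You are also right that the $\nu^\top R\nu$ term is the real obstacle in the sufficiency direction: Assumption~\ref{ass:lq_cost} imposes no sign condition on $R$, and the paper's proof silently absorbs all $u$-dependence into $\varrho(x,u)$ without checking nonnegativity. However, your proposed LMI remedy cannot work in general: for every $P$, both $PB\nu$ and $B^\top PB\nu$ vanish for $\nu\in\ker B$, and $\ker B\neq\{0\}$ whenever $\mcl{S}\neq\emptyset$ (the reactive storage injections do not enter the state dynamics), so no choice of $P$ compensates a negative direction of $R$ inside $\ker B$. Indeed, taking $u=u_s+t\nu_0$ with $\nu_0\in\ker B$ and $\nu_0^\top R\nu_0<0$ gives $\tilde\ell(x_s,u)=t^2\nu_0^\top R\nu_0<0$ for small $t$, and such points are feasible because $(x_s,u_s)$ lies in the \emph{interior} of the constraint set --- which also rules out your first remedy of restricting $\nu$ via the tangent space of $\Y(\nodDist)$. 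So the ``if'' direction genuinely requires $R\succeq 0$ (at least on $\ker B$); this hypothesis is satisfied in the paper's examples ($R=0$) but is missing from the statement and from the paper's own proof as well. Under $R\succeq 0$ your proof is complete and somewhat sharper than the original.
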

\begin{proof}
Observe that $\left(\X\times\U\right) \cap h^{-1}(\Y(d))$ is a controlled forward invariant subset of $\X\times\U$, cf. Lemma \ref{lem:invariance}. Hence, we will investigate dissipativity of \eqref{eq:sys_OPF} on $\X\times\U$ and then invoke Lemma \ref{lem:DI_subset}.

Similar to \cite[Lem 4.1]{Gruene18a} we consider $\lambda(x) = x^\top P x + p^\top x$ and the
 rotated stage cost 
\[
\tilde\ell(\nodState, \nodInput) = \ell(\nodState, \nodInput) - \ell(\nodState_s, \nodInput_s) + \lambda(x) - \lambda(Ax + Bu),
\]
which can be rewritten as
\[
\tilde\ell(\nodState, \nodInput) = x^\top\left(Q +P -A^\top P A\right)x + \varrho(x,u),
\]
where the term $ \varrho(x,u)$ collects all terms in $u$ and all non-quadratic terms in $x$. For system \eqref{eq:sys_OPF} to be strictly dissipative on $\left(\X\times\U\right)$  it needs to hold that
$
\tilde\ell(\nodState, \nodInput) \geq \alpha(\|x-x_s\|)
$
with $(\nodState_s, \nodInput_s) \in  \interior\left(\X\times\U\right)$. 
This implies that $\tilde\ell(\nodState, \nodInput_s)$ needs to have a strict unconstrained global minimum on 
$\X\times\U$. Since $\tilde\ell(\nodState, \nodInput_s)$ is quadratic in $x$, this implies that $Q +P -A^\top P A > 0$. By construction $A = I^{n_\nodState}$, hence we must have $Q >0$. Now, applying Assumption \ref{ass:opf} and Lemma \ref{lem:DI_subset} yields the assertion.
\end{proof}
The main insight of the last result can be reformulated as follows: only if $Q>0$ in $\ell$, will system \eqref{eq:sys_OPF} be dissipative with quadratic storage. 
Moreover, if $Q>0$, one can choose a linear storage  $\lambda(x) = p^\top x$ since $A = I^{n_\nodState}$ removes the matrix $P$ from the Lyapunov inequality  $Q +P -A^\top P A > 0$. However, in \OPF problems the generator costs (which are included in $\ell(x,u)$ are typically chosen to be strictly convex and quadratic in the respective part of $x$. Moreover, since usually one wants to avoid high values for the state of charge, it is reasonable to consider a convex penalization of the components of $x$ referring to the state of charge at the storage buses. Overall, $Q >0$ is not a severe restriction for the application at hand. 

Finally, we remark that Proposition \ref{prop:dissOPF} does not make any statement with respect to dissipativity of system \eqref{eq:sys_OPF}  for steady-state pairs located on the boundary of the constraints $\left(\X\times\U\right) \cap h^{-1}(\Y(d))$.\footnote{The difficulty in analyzing this case stems from the fact that it is tricky to characterize the boundary of the set $ h^{-1}(\Y(d))$ due to the nonlinearity of the power-flow manifold.} Such an undesirable case, however, would correspond to rather large disturbances pushing a power system to its limits.  
We leave a detailed analysis of this case for future work. 

Let $\Pi_x: (x,u) \mapsto x$ denote the projection from $\R^{n_x\times n_u}$ onto $\Rnx$. 
Consider the set 
\[
\X(d) := \Pi_x\left( (\X\times\U)\cap h^{-1}(\Y(\nodDist))\right) \subset \X,
\]
which is the projection  of the combined input-state constraints of \OCP \eqref{eq:OPF_OCP} onto $\R^{n_x}$. 

\begin{proposition}[Properties of economic \NMPC for \AC-\OPF]\label{prop:OPF_EMPC}
Consider the \NMPC scheme based on \OCP \eqref{eq:OPF_OCP}. Let Assumption \ref{ass:lq_cost} hold with $Q>0$, and let $\nodDist(t)\in\D$ with $\nodDist \equiv const$. For any $x_0$ and $d\in \D$, let $x_0$ and $x_s(\nodDist)$ be contained in a path-connected subset of $\X(d)$.

\begin{enumerate}[label=(\roman*)]
\item If Assumption \ref{ass:opf} holds with   $(x_s(d), u_s(d)) \in\interior\left( (\X\times\U)\cap h^{-1}(\Y(\nodDist))\right)$, then the \NMPC scheme based on \OCP \eqref{eq:OPF_OCP} satisfies the conditions of Theorem \ref{thm:stab_EMPC}. 
\item If   $(x_s(d), u_s(d)) \in\left( (\X\times\U)\cap h^{-1}(\Y(\nodDist))\right)$ (i.e. Assumption \ref{ass:opf} is relaxed accordingly) and for the chosen $Q$, \OCP \eqref{eq:OPF_OCP} is strictly dissipative with respect to $(x_s(d), u_s(d))$, 
then  the \NMPC scheme based on \OCP \eqref{eq:OPF_OCP} has the properties asserted in Theorem \ref{thm:stab_EMPC}. \eBox
\end{enumerate} 
\end{proposition}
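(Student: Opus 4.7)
The plan is to reduce both parts to Theorem \ref{thm:stab_EMPC}, respectively Corollary \ref{cor:stab_EMPC}, by verifying Assumption \ref{ass:EMPC} (i)--(iii) for the \OCP \eqref{eq:OPF_OCP}. The key observation is that all three items---strict dissipativity, exponential reachability of the steady state, and reachability of the Jacobian linearization---can be extracted from Proposition \ref{prop:dissOPF} and Lemma \ref{lem:invariance}, once one exploits the specific structure $A=I^{n_\nodState}$ together with $\rank(B)=n_\nodState$ of the \OPF dynamics \eqref{eq:sys_OPF}.

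For part (i), Assumption \ref{ass:EMPC} (i) is an immediate consequence of Proposition \ref{prop:dissOPF}, since $\nodDist$ is constant, Assumption \ref{ass:lq_cost} holds with $Q>0$, and $(x_s(d), u_s(d))$ lies in the interior of $(\X\times\U)\cap h^{-1}(\Y(d))$. Assumption \ref{ass:EMPC} (ii) follows from Lemma \ref{lem:invariance} (ii) applied to the path-connected subset of $\X(d)$ containing both $x_0$ and $x_s(d)$: there exists a feasible control sequence of finite length $M$ steering the state from $x_0$ to $x_s(d)$ while remaining in this subset, so each intermediate state admits auxiliary variables $\nodVar(k), \nodAlgstate(k)$ satisfying the power-flow equations. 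Holding $(x_s(d), u_s(d))$ for $k \geq M$ then yields an infinite-horizon admissible trajectory, and the exponential bound $\|(x(k),u(k)) - (x_s,u_s)\| \leq c\rho^k$ is trivially met for any $\rho \in (0,1)$ by choosing $c$ large enough to dominate the finite transient. Finally, Assumption \ref{ass:EMPC} (iii) is immediate: the Jacobian linearization of \eqref{eq:sys_OPF} at any point is the system itself, and inspection of $B$ reveals $\rank(B)=n_\nodState$, so the linearization is in fact one-step (hence $n_\nodState$-step) reachable. Invoking Theorem \ref{thm:stab_EMPC} then yields the claim.

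For part (ii), the steady-state pair may lie on the boundary of $(\X\times\U)\cap h^{-1}(\Y(d))$, so Theorem \ref{thm:stab_EMPC} no longer applies and we invoke Corollary \ref{cor:stab_EMPC} instead, treating the compact set $\X(d)$ as the effective state constraint. Strict dissipativity is assumed directly, and the exponential reachability argument above carries over verbatim, since it only uses that $x_0$ and $x_s(d)$ lie in a common path-connected subset of $\X(d)$. The remaining ingredient is finite-time reachability of \eqref{eq:sys_OPF} on $\mcl{B}_\rho(x_s(d)) \cap \X(d)$ with inputs in $\U$ for some $\rho>0$, which again follows from Lemma \ref{lem:invariance} (ii) provided $\rho$ is chosen so that $\mcl{B}_\rho(x_s(d)) \cap \X(d)$ is path-connected at $x_s(d)$. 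The main obstacle---and the reason part (ii) requires dissipativity to be assumed rather than derived---is precisely this local topological issue: near boundary points of the non-convex set $h^{-1}(\Y(d))$ the power-flow manifold may fail to project onto a locally path-connected subset of $\X$, which is also why Proposition \ref{prop:dissOPF} does not cover the boundary case and why the assertion there is formulated as an implication conditioned on strict dissipativity.
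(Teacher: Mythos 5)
Your proposal is correct and follows essentially the same route as the paper: part (i) is proved by verifying Assumption \ref{ass:EMPC} (i)--(iii) via Proposition \ref{prop:dissOPF} (strict dissipativity from $Q>0$), Lemma \ref{lem:invariance} (ii) (finite-time transfer within the path-connected subset, which yields the exponential reachability bound), and Lemma \ref{lem:invariance} (iii) (one-step reachability of the linearization), and part (ii) is handled by assuming dissipativity directly and invoking Corollary \ref{cor:stab_EMPC} with the finite-time reachability supplied again by Lemma \ref{lem:invariance} (ii). Your write-up is in fact somewhat more explicit than the paper's (e.g.\ on how the finite transient is dominated by $c\rho^k$ and on why the boundary case forces dissipativity to be an assumption), but the decomposition and the key lemmas used are identical.
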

\begin{proof}
Part (i): 
Recall that Assumptions \ref{ass:opf} yields that for all $d\in\D$ the optimal steady-state  pair $(x_s(d), u_s(d)) \in\interior\left( (\X\times\U)\cap h^{-1}(\Y(\nodDist))\right)$.
In order to apply  Theorem \ref{thm:stab_EMPC}, we need to show (i)--(iii) of Assumption \ref{ass:EMPC}: (i) Strict dissipativity follows from $Q>0$ and Proposition \ref{prop:dissOPF}. (ii) Exponential reachability follows from the requirement that $x_0$ and $x_s(\nodDist)$ are contained in a path-connected subset of $\X(d)$, cf. the proof of Part (ii) of Lemma \ref{lem:invariance}. (iii) It is trivial to see that \eqref{eq:sys_OPF} is 1-step reachable, cf.  Part (iii) of Lemma \ref{lem:invariance}.

Part (ii): Observe that Lemma \ref{lem:invariance} Part (ii) implies finite-time reachability of \eqref{eq:sys_OPF}. As Part (ii) suppose dissipativity of \OCP \eqref{eq:OPF_OCP} Corollary \ref{cor:stab_EMPC} can be applied.
\end{proof}
The last result shows that under quite mild technical assumptions the  \NMPC scheme based on \OCP \eqref{eq:OPF_OCP} will practically track the optimal steady state $x_s(d)$. Moreover, it is clear that the condition of $x_0$ and $x_s(\nodDist)$ being elements of a path-connected subset of $\X(d)$ is a kind of implicit reachability assumption, cf. the proof of Part (ii) of Lemma \ref{lem:invariance}. However, it deserves to be noted that path connectedness is merely sufficient and not necessary for reachability.

\begin{rema}[Occassionally varying disturbances]
Finally, we remark without in-depth elaboration that Proposition \ref{prop:OPF_EMPC} can be extended to the case of occasionally varying $d(t)$.  This extension requires ensuring the reachability of the $x_s(d(t))$ from $x(t)$ and that the next disturbance does not occur before the closed-loop system is converged to a sufficiently small neighborhood of $x_s(d(t))$. If such or similar conditions hold, one can expect that the economic \NMPC scheme  based on \OCP \eqref{eq:OPF_OCP}  will automatically track the optimal steady state.\eBox
\end{rema}

\subsection{Varying Disturbances}
However, in the context of power systems it is not to be expected that disturbances vary occasionally; rather they will vary continuously ($d(t) \not= d(t+1)$). Hence we turn to this case next, i.e. we discuss the case of time-varying disturbance sequences 
\begin{equation*}
d(\cdot) = d(0), d(1), \dots, d(N), \dots.
\end{equation*}
 Moreover, consider the set of states reachable from $x$ given the future disturbance $d$. In the extended $(x,u)$ space this set is given by 
\begin{equation}
\mbb{A}(x, d) = \left(f(x, \U)\times\U \right)\cap (\X\times\U)\cap h^{-1}(\Y(\nodDist)) \subseteq \Rnx\times\Rnu.
\end{equation}
This set is formed by the intersection of the reachable set neglecting state constraints---$f(x, \U)\times\U$---with the Cartesian product of state and input constraints---$\X\times\U$---and the mixed input-state constraints imposed by the power flow equations and the disturbance $d$---$h^{-1}(\Y(\nodDist))$. Note that we define the reachable set in the extended $(x,u)$ space as the definitions of $\Y(d)$ in \eqref{eq:Yset} and of $h^{-1}(\Y(\nodDist))$ in \eqref{eq:Yinvset} imply that the power flow equations constitute a coupled input-state constraint parametrized by $d$.

\begin{lemma}[Recursive feasibility of  \OCP \eqref{eq:OPF_OCP} for disturbance sequences $d(\cdot)$]\label{prop:reachOCP_seq}
If for all $x\in\X$ and all $d\in\D$
\begin{equation}\label{eq:reach_cond}
\mbb{A}(x, d) =  \left(f(x, \U)\times\U \right)\cap (\X\times\U)\cap h^{-1}(\Y(\nodDist)) \not=\emptyset
\end{equation}
holds.
Then, for any horizon $N$ and any  disturbance sequence $d(\cdot)\in\D$, if  \OCP \eqref{eq:OPF_OCP} is feasible for $t = 0$ and $x(0) \in \X$, it is feasible for all $t \in \N$. \eBox
\end{lemma}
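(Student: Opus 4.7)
The proof proceeds by induction on $t$. The base case $t = 0$ is the feasibility hypothesis. For the induction step, assume \OCP \eqref{eq:OPF_OCP} is feasible at time $t$ with initial state $x(t) \in \X$ and let $u^\star(\cdot|t)$, $x^\star(\cdot|t)$, $y^\star(\cdot|t)$ denote an optimal feasible trajectory. The first task is to verify that the successor $x(t+1) = f(x(t), u^\star(0|t)) = x^\star(1|t)$ lies in $\X$; for $N \geq 2$ this is automatic since the \OCP enforces the state constraint at $k = 1$.

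For the second task, I would construct a feasible candidate sequence at $t+1$ directly from the non-emptiness condition \eqref{eq:reach_cond}, rather than using the classical \emph{shift-and-extend} construction. Starting from $\tilde x(0|t+1) := x(t+1) \in \X$, iterate over $k = 0, 1, \dots, N-1$: at each step, invoke $\mathbb{A}(\tilde x(k|t+1), d(k|t+1)) \neq \emptyset$ to extract a control $\tilde u(k|t+1) \in \U$ for which the power-flow relation $h(\tilde x(k|t+1), \tilde u(k|t+1)) \in \Y(d(k|t+1))$ holds and the successor $\tilde x(k+1|t+1) := f(\tilde x(k|t+1), \tilde u(k|t+1))$ again lies in $\X$. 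The latter closes the induction over $k$ and permits the condition to be applied again at the next step. The resulting input $\tilde u(\cdot|t+1)$ together with the associated state and auxiliary trajectories satisfies every constraint of \eqref{eq:OPF_OCP} at time $t+1$, establishing recursive feasibility.

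The main obstacle is reading \eqref{eq:reach_cond} as delivering exactly this one-step controlled forward invariance under the time-varying power-flow constraint --- i.e.\ the existence of $u \in \U$ with both $f(x, u) \in \X$ and $h(x, u) \in \Y(d)$ --- which is the natural content of the intersection defining $\mathbb{A}(x, d)$. Once that reading is in place, the induction is mechanical and, notably, no assumption on forecast consistency of the form $d(k|t+1) = d(k+1|t)$ enters the argument, which is precisely what makes the result applicable to the varying-disturbance setting pursued in this subsection.
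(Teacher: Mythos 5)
Your argument is essentially the paper's: the published proof likewise iterates the one-step extension that non-emptiness of $\mathbb{A}(x,d)$ provides, asserting that $x(t)\in\Pi_x\big(\mathbb{A}(x(t-1),d(t))\big)$ yields a $u(t)\in\U$ with $\big(f(x(t),u(t)),u(t)\big)\in\mathbb{A}(x(t),d(t+1))$, and then chains this over the horizon and over $t$. Your write-up only makes explicit what the paper's two-line proof leaves implicit --- the induction structure, the $N\ge 2$ caveat for $x(t+1)\in\X$, and the reading of \eqref{eq:reach_cond} as one-step controlled forward invariance under the coupled state--input constraint $h^{-1}(\Y(d))$ --- so no substantive difference or gap.
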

\begin{proof}
Recall that $\mbb{A}(x, d)$ is the set of $(x,u)$ reachable from state $x$ for a given value of $d$. If for any combination of $x$ and $d$ this set is non-empty, this allows concluding that $x(t) \in  \Pi_x(\mbb{A}(x(t-1), d(t)))$ implies the existence of $u(t)\in\U$ such that $\begin{bmatrix}f(x(t), u(t)) & u(t) \end{bmatrix}^\top \in \mbb{A}(x(t), d(t+1))$.
\end{proof}
The above result establishes a connection between the variability of the disturbance $d\in\D$---i.e. the "size" of $\D$---and the reachability properties of the underlying system \eqref{eq:sys_OPF}. 
 However, we remark that---even neglecting input constraints---it is in general not possible to show reachability for arbitrarily large sets of values $d\in\D$. This follows from the structure of the power flow equations \eqref{eq:powerflow} which implies a physical upper limit for the power that can be transmitted over a line connecting two buses. 
 Excluding such pathological 
 cases, and given the reachability properties documented in Lemma \ref{lem:invariance}, it is evident that one has three main options to enforce that \eqref{eq:reach_cond} holds:
 either one ensures that the set $\D$ is sufficiently "small" (i.e. the disturbances cannot vary too much); or one supposes that the sequence $d(\cdot)$ varies slowly; 
or one ensures that the input constraint $\U$ is sufficiently "large" (i.e non-restrictive), respectively, the state constraint $\X$ is relaxed by adding more storage. 
 Considering the last option it deserves to be noted that besides the transmission capacity limit of the grid, the state constraints (which include active and reactive power injections by generators) put a limit on the "size" of the disturbance $d$ in terms of power demand. In other words, neglecting generator ramp constraints, generation capacity (plus storage) needs to satisfy load demands. 

Lemma \ref{prop:reachOCP_seq} highlights a sufficient condition for recursive feasibility in the \NMPC scheme based on \OCP \eqref{eq:OPF_OCP}. It does, however, not assert any kind of stability properties. 
Indeed, as the disturbance $d(\cdot)$ is allowed to vary in each time-step, stability of some kind of setpoint cannot be expected. 
However, using time-varying dissipativity notions, it might be possible to make further statements about closed-loop performance. This is subject of ongoing and future work.

\begin{proposition}[Performance bound for open-loop predictions] \label{prop:costBound}
Consider \OCP \eqref{eq:OPF_OCP} with $d(\cdot)\in\D$. Suppose that Assumption \ref{ass:opf} holds and that the conditions of Lemma \ref{prop:reachOCP_seq} are satisfied. Furthermore, let $\U = \Rnu$. 
Then, for all $N\in\N$ and all $x\in \X$, 
\[
V_N(x, t) \leq \sum_{k=0}^{N-1}\ell(x_s(d(k|t)),\, u_s(d(k|t))),
\]
whereby the sequence $(x_s(d(k|t)),\, u_s(d(k|t)))$ is generated by evaluating \eqref{eq:xsdusd} for all $d(k|t)$, $ k \in \{0, \dots, N-1\}$. \eBox
\end{proposition}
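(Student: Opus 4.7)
The strategy is to establish the upper bound on $V_N(x,t)$ by exhibiting a feasible input sequence for \OCP \eqref{eq:OPF_OCP} whose associated cost is at most the claimed sum; since $V_N(x,t)$ is defined as the minimum over feasible sequences, any such candidate suffices.

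The natural candidate is constructed as follows. First, I would invoke Lemma \ref{lem:invariance}(iii): because $\U = \Rnu$, any $x_1\in\X$ is $1$-step reachable from any $x_0\in\X$, so one has full freedom to prescribe the evolution of $x(\cdot|t)$. Combined with the recursive feasibility provided by Lemma \ref{prop:reachOCP_seq} and the stage-wise admissibility of $(x_s(d(k|t)), u_s(d(k|t)))$ granted by Assumption \ref{ass:opf}, I would build a candidate trajectory whose $k$-th stage attains the optimal steady-state pair $(x_s(d(k|t)), u_s(d(k|t)))$; feasibility with respect to the power flow constraints then holds by construction, since $(x_s(d), u_s(d))\in(\X\times\U)\cap h^{-1}(\Y(d))$ for each $d$. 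Summing the resulting stage costs then yields exactly $\sum_{k=0}^{N-1}\ell(x_s(d(k|t)), u_s(d(k|t)))$, giving the claim.

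The main obstacle lies in reconciling this ambition with the discrete-time dynamics. Since $A = I^{n_\nodState}$, the steady-state condition in \eqref{eq:xsdusd} forces $B u_s(d)=0$, so each $u_s(d)$ lies in $\ker B$; consequently, applying $u_s(d(k|t))$ at state $x_s(d(k|t))$ keeps the state frozen rather than transitioning it to $x_s(d(k+1|t))$. In general, driving from $x_s(d(k|t))$ to $x_s(d(k+1|t))$ requires an input differing from $u_s(d(k|t))$ by a correction lying outside $\ker B$, and a similar correction is needed at $k=0$ to bridge $x(0|t)=x$ to the prescribed steady state. The technical heart of the proof is then to exhibit these correction inputs as elements of $\Rnu$---admissible because $\U=\Rnu$---and to argue that the candidate's total cost still matches the claimed bound. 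I expect this accounting to be the most delicate step, likely resolved by splitting the input at each stage into a $\ker B$-component coinciding with $u_s(d(k|t))$ (contributing exactly $\ell(x_s(d(k|t)), u_s(d(k|t)))$) and a complementary component whose cost contribution is shown to telescope across $k$ or to be absorbed into the stagewise steady-state costs.
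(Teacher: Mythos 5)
Your proposal follows the same route as the paper's own proof: since $\U=\Rnu$, Lemma \ref{lem:invariance}(iii) gives one-step reachability of any state in $\X$, Assumption \ref{ass:opf} supplies the stagewise-admissible pairs $(x_s(d(k|t)),u_s(d(k|t)))$, and the bound follows by evaluating the objective along this candidate and using that $V_N$ is a minimum. The paper's proof is exactly this argument, stated in three sentences, with the key step being the assertion that the sequence $(x_s(d(t)),u_s(d(t)))$ is ``infinite-horizon feasible.''

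The obstacle you identify is genuine, and it is worth noting that the paper does not resolve it either. Because $A=I^{n_\nodState}$, the steady-state condition in \eqref{eq:xsdusd} forces $Bu_s(d)=0$, so the pair sequence $(x_s(d(k|t)),u_s(d(k|t)))$ is \emph{not} a trajectory of \eqref{eq:sys_OPF}: realizing the state sequence $x_s(d(k|t))$ requires corrected inputs $\tilde u(k)=u_s(d(k|t))+v(k)$ with $Bv(k)=x_s(d(k{+}1|t))-x_s(d(k|t))$, and similarly at $k=0$ to leave the initial state $x$. Three things must then be checked that neither you nor the paper fully discharges: (a) the corrected pairs remain in $h^{-1}(\Y(d(k|t)))$ --- not automatic, since $D\neq 0$ means the storage components of $u$ enter $\nodVar$ directly; (b) the stage cost of the corrected input equals (or is dominated by) $\ell(x_s(d(k|t)),u_s(d(k|t)))$ --- true when $R=0$ and $r=0$ as in the paper's examples, but not for the general quadratic cost of Assumption \ref{ass:lq_cost}; and (c) the first stage cost is $\ell(x,\tilde u(0))$, evaluated at the arbitrary initial state $x$ rather than at $x_s(d(0|t))$, which is not bounded by $\ell(x_s(d(0|t)),u_s(d(0|t)))$ without further argument. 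Your proposed $\ker B$ decomposition is the right device for (a) and (b) in the special structure at hand (the generator increments $\delta p^g,\delta q^g$ lie outside $\ker B$ but are annihilated by $D$ and, with $R=0$, $r=0$, are cost-free), but as written the accounting is left open, so the proposal is an honest reconstruction of the paper's argument together with a correct diagnosis of where that argument is incomplete, rather than a finished proof.
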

\begin{proof}
Observe that letting $\U = \Rnu$ any $x_1\in \X$ in the state constraint is reachable from any $x_0 \in \X$ in one step. Moreover, Assumption \ref{ass:opf} gives that the sequence $(x_s(d(t)),\, u_s(d(t)))$, $t\in\N$ is infinite-horizon feasible. Hence the optimal value function $V_N(x, t)$ is bounded by the performance generated by this sequence.
\end{proof}

A trivial consequence of the last result is that, whenever no storages are present and when the input constraints are sufficiently "large", the solution to  \OCP \eqref{eq:OPF_OCP}  corresponds to the sequence of optimal steady-state pairs
$(x_s(d(\cdot)),\, u_s(d(\cdot)))$ generated by evaluating \eqref{eq:xsdusd} for all $d(k|t), k \in \N$. 
Hence, one may expect/conjecture that the presence of storage will almost always improve closed-loop performance (as the presence of storages enlarges the feasible set). While a formal proof of such a property is beyond the scope the present paper, the simulation results presented in Section \ref{sec:example} indicate that this is indeed the case.

\section{Simulation Examples} \label{sec:example}
In this section we illustrate the findings from above by means of simulation examples. To this end, we consider a 5-bus system and the \IEEE-118 bus system shown in Figure~\ref{fig:5busCase} and Figure \ref{fig:118busCase} respectively. For both systems, we investigate two different load scenarios:  (i) we consider a piece-wise-constant load in  in order to study the behavior of the system at steady state and (ii) we consider a random demand sequence to investigate the system behavior in the time-varying case.
For the stage cost $\ell$ we consider the standard economic cost for active power generation \eqref{eq:lq_cost}; in both cases we add a small quadratic cost for the state of charge for all storages to ensure positive definiteness of $Q$. 
The sampling period is set to $\Delta = t_b = 1 \text{h}$ in all scenarios, which also serves as base time of for the here used p.u. normalization system.\footnote{The per-unit (p.u.) system is a standardized procedure of normalizing quantities in power systems engineering, cf. \cite{Grainger1994}. In this system, all powers are normalized to a corresponding basepower $S_b$ and a base voltage $V_b$. The base voltage varies for different buses and can be obtained from the \MATPOWER database. Here, we additionally define a basetime $t_b$ which is commonly not considered in the classical p.u. system for single-stage \OPF problems as it usually does not consider storage units. The base time is needed for expressing the the state-of-charge of all storages in per-unit, i.e. $s_{\mathrm{p.u.}}= s/{(S_b\cdot t_b)}$.}  Furthermore, we consider a base power of $S_b=100\,$MVA, and a prediction  horizon of $N=48$.
We simulate the closed-loop system  \eqref{eq:sys_closed_loop} from $t_0 =0$ up to $t_f = 48$ where the effective demands are given for $t_f + N = 96\,$. 
We consider a nominal setting for all simulations (perfect forecast for $d$ and no model-plant mismatch) and for all generators a maximum ramping capability of $10 \%$ of the generator's upper limit per hour. 
The problem data for both systems are retrieved from the \MATPOWER database\cite{Zimmerman2011}. 
Furthermore, we use CasADi 3.3.0\cite{Andersson12a} with \IPOPT\cite{Waechter2006} as underlying solver and \MATLAB R2016a on an Intel Core i7-4790 machine with $32\,$GB memory for all our simulations.  

We remark that the maximum computation time  for one \OCP  is less than one second for the 5-bus system, respectively, $20$s for the 118-bus system, while the usual sampling time for multi-stage \OPF is $15$min to $1$h. Hence, computation time does not pose a significant issue for medium-sized grids.  

\subsection{5-Bus System}

\begin{figure}[t]
	\centering
	\includegraphics[width=0.5\textwidth]{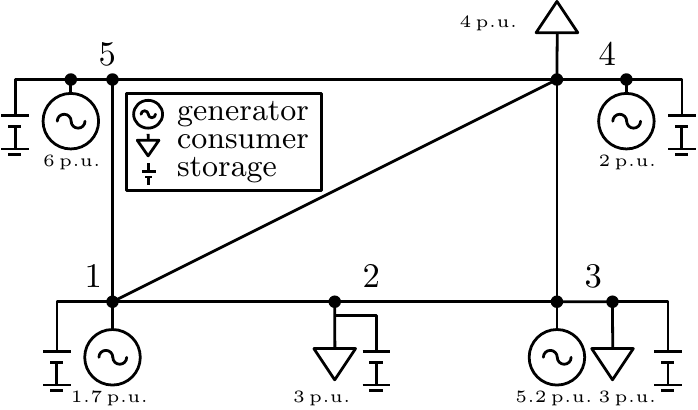}
	\caption{Modified 5-bus system with storage \cite{Li2010}.}
	\label{fig:5busCase}
\end{figure}

Figure~\ref{fig:5busCase} shows the considered  5-bus system, which is a slightly modified variant of \cite{Li2010}. We introduce energy storage with a capacity of $\bar e = 2\, \text{p.u}$ and maximum charging/discharging power of $\underline{p}^s = \bar p^s = 0.1 \, \text{p.u}$ at all buses. 
The parameters of the stage cost $\ell$ from \eqref{eq:lq_cost} are: ${Q=\operatorname{diag}\left (\operatorname{diag}\left (10,\; 11,\; 12,\; 13\right ),\,10^{-3}\cdot I^{9 \times 9}\right )\frac{\$}{{\text{h}\, (\text{p.u.})^2}}}$, 
$R=0^{18\times 18}$, ${q = 10^2\cdot \left (15,\; 30,\; 40,\; 10,\; 0^{1\times 9}\right )^\top\frac{\$}{{\text{h}\, (\text{p.u.})}}}$ and $r = {0^{18\times 1}}$.\footnote{The original case data from \cite{Li2010} considers linear cost for active power injections only. To ensure $Q\succ 0$ we add small quadratic cost terms here.}  All storages are initialized with $0.5\, \bar e$, i.e. $1\, \text{p.u.}$. 

We show our numerical results for two different disturbance scenarios: In the first one we assume an occasionally varying effective demands $d(t)$ (demands at each bus subtracted by the respective uncontrollable renewable feed-ins) with demand variations at bus 4. In the second case, we randomly generate realizations of an auto-correlated white Gaussian random process with time-varying mean to investigate the effects of time-varying disturbances.\footnote{The random demand sequences are derived based on the demand profile for Germany from the 23$^\text{\text{rd}}$ of July 2018 obtaining the reactive power demands by a constant power factor of $0.8$ for the 5-bus system. Note that as we consider the effective demand here, the demands can become negative in case the renewable regeneration exceeds the demand.} 

\begin{figure}[t] 
		\begin{subfigure}[c]{\textwidth}
		\includegraphics[trim=0 365 0 45, clip, width=0.95\textwidth]{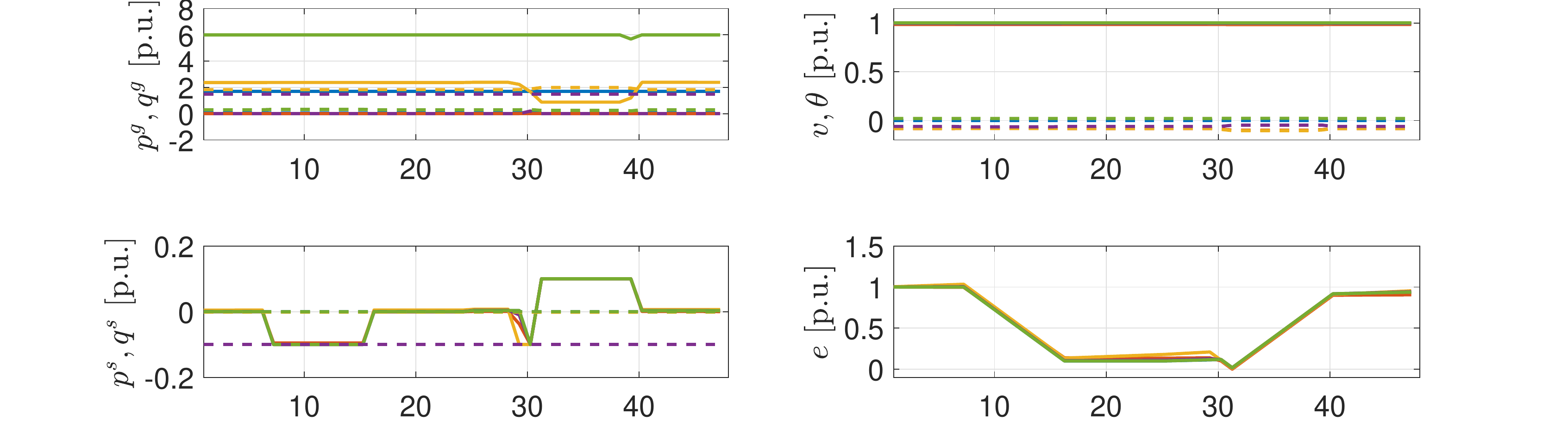}
		\captionsetup{justification   = raggedright,singlelinecheck = false,format=hang,width=.9\textwidth}
		\subcaption{Inputs $u(t)$, outputs $y(t)$, states $z(t)$ and $x(t)$ for all buses with $p^g(t),p^s(t),v(t), e(t)$ in solid and  $q^g(t),q^s(t), \theta(t)$ in dashed.}
		\label{fig:5busStepLoad1}
	\end{subfigure}
	\begin{subfigure}[c]{\textwidth}
		\includegraphics[trim=0 145 0 280, clip, width=0.95\textwidth]{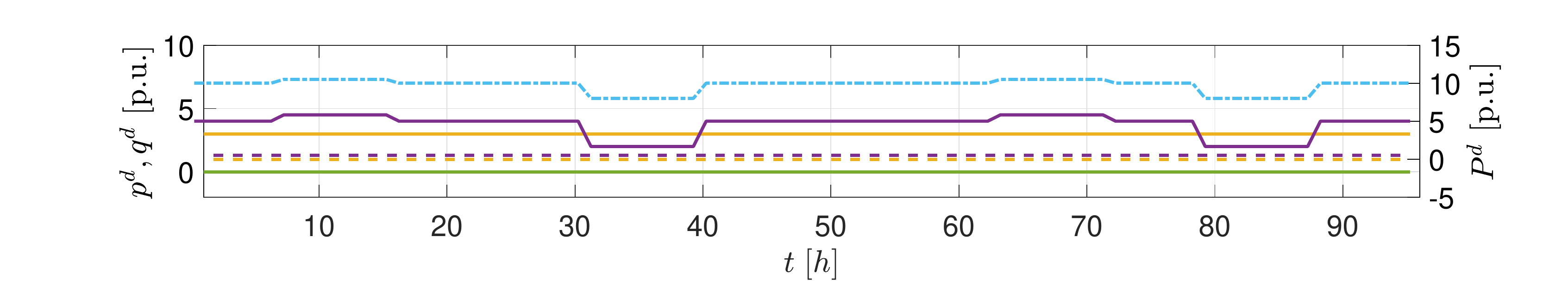}
		\captionsetup{justification   = raggedright,singlelinecheck = false,format=hang,width=.9\textwidth}
		\subcaption{Disturbances $d(t)$ with $p^d(t)$ in solid, $q^d(t)$  in dashed and aggregated active power demand $P^d(t)$ in dash-dotted.}
		\label{fig:5busStepLoad2}
	\end{subfigure}
	\begin{subfigure}[c]{\textwidth}
	\includegraphics[trim=0 15 0 420, clip, width=0.95\textwidth]{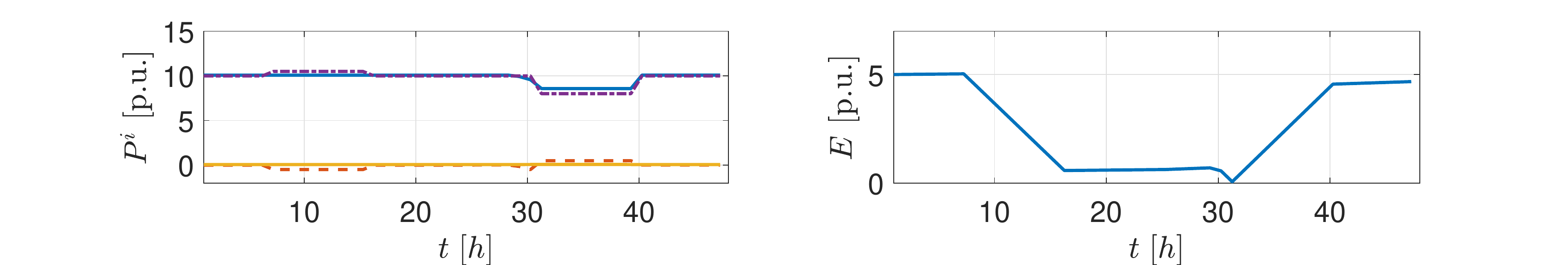}
		\captionsetup{justification   = raggedright,singlelinecheck = false,format=hang,width=.9\textwidth}
	\subcaption{Aggregated active power demands/infeeds/losses $P^i(t)$ with $i \in \{d,g,s,\mathrm{loss}\}$ and total state of charge $E(t)$ with $P^d(t)$ in dash-dotted, $P^g(t)$ in solid, $P^s(t)$ in dashed and $P^{\mathrm{loss}}(t)$ in yellow solid.}
	\label{fig:5busStepLoad3}
\end{subfigure}
	\caption{Closed-loop trajectories for the 5-bus system with occasionally varying effective active power demand.}
\label{fig:5busStepLoad}
\end{figure}
\begin{figure}[t] 
	\begin{subfigure}[c]{\textwidth}
		\includegraphics[trim=0 365 0 45, clip, width=0.95\textwidth]{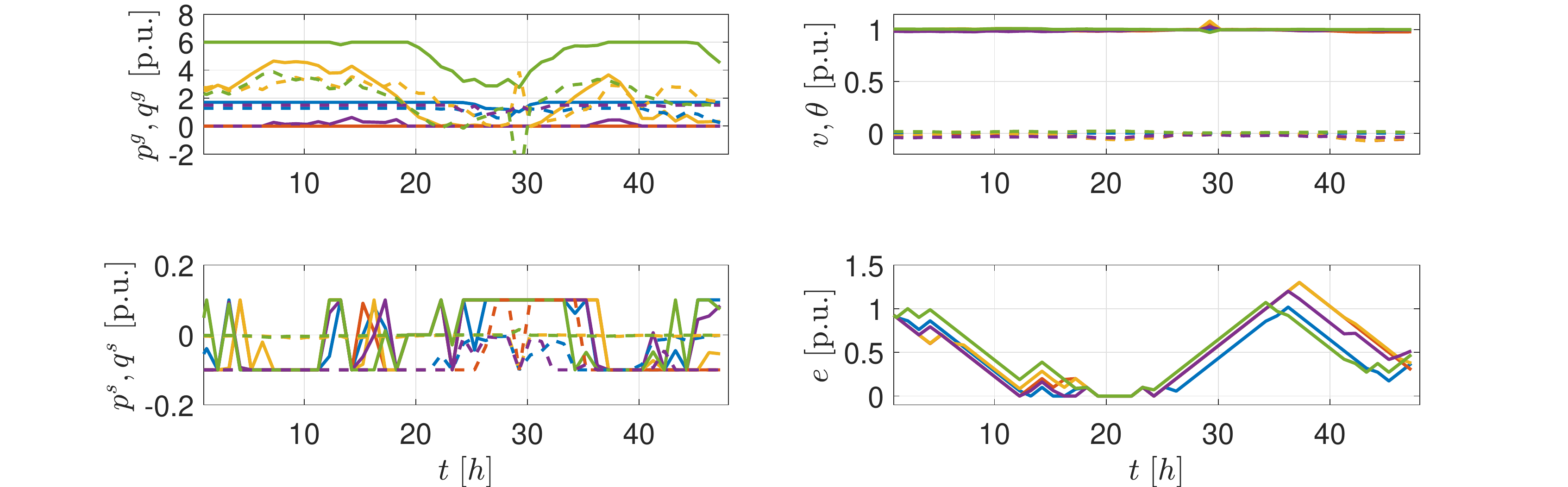}
		\captionsetup{justification   = raggedright,singlelinecheck = false,format=hang,width=.9\textwidth}
		\subcaption{Inputs $u(t)$, outputs $y(t)$, states $z(t)$ and $x(t)$ for all buses with $p^g(t),p^s(t),v(t), e(t)$ in solid and  $q^g(t),q^s(t), \theta(t)$ in dashed.}
		\label{fig:5busrandload1}
	\end{subfigure}
	\begin{subfigure}[c]{\textwidth}
		\includegraphics[trim=0 145 0 280, clip, width=0.95\textwidth]{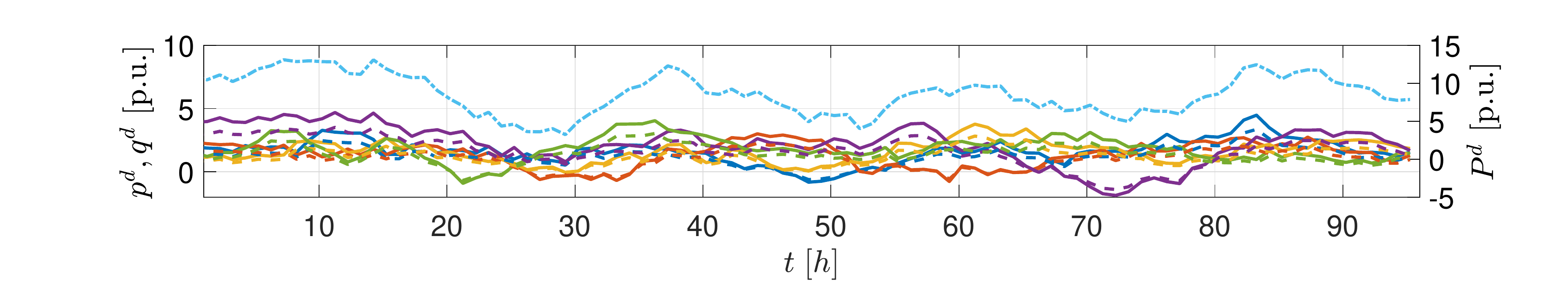}
		\captionsetup{justification   = raggedright,singlelinecheck = false,format=hang,width=.9\textwidth}
		\subcaption{Disturbances $d(t)$ with $p^d(t)$ in solid, $q^d(t)$  in dashed and aggregated active power demand $P^d(t)$ in dash-dotted.}
		\label{fig:5busrandload2}
	\end{subfigure}
	\begin{subfigure}[c]{\textwidth}
		\includegraphics[trim=0 15 0 420, clip, width=0.95\textwidth]{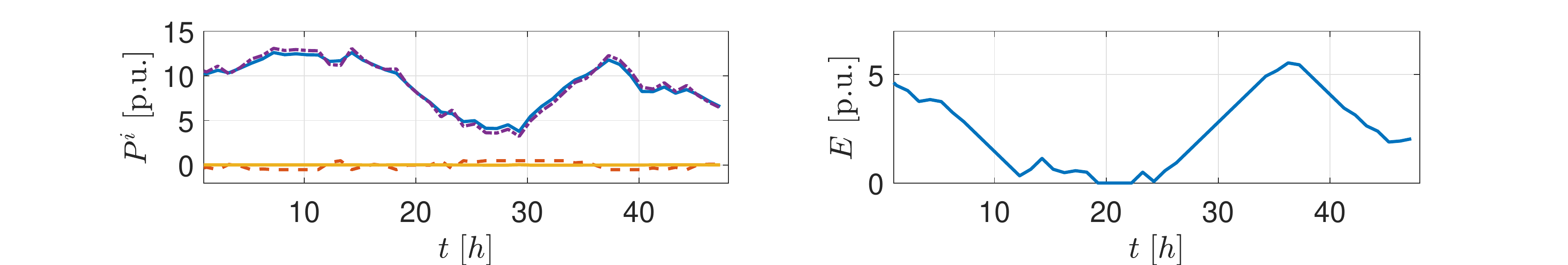}
		\captionsetup{justification   = raggedright,singlelinecheck = false,format=hang,width=.9\textwidth}
		\subcaption{Aggregated active power demands/infeeds/losses $P^i(t)$ with $i \in \{d,g,s,\mathrm{loss}\}$ and total state of charge $E(t)$ with 	$P^d(t)$ in dash-dotted, $P^g(t)$ in solid, $P^s(t)$ in dashed and $P^{\mathrm{loss}}(t)$ in yellow solid.}
		\label{fig:5busrandload3}
	\end{subfigure}
		\caption{Closed-loop trajectories for the 5-bus system with random effective active power demand.}
	\label{fig:5busrandload}
\end{figure}

For the sake of intuitive interpretation of our numerical results, we define aggregated values for active power generation/demands, the storage powers and the state of charge as
\begin{align*}
P^g(t) = \sum_{l \in \mathcal{G}} p^g_l(t), \qquad
P^d(t) = \sum_{l \in \mathcal{N}} p^d_l(t), \qquad
P^s(t) = \sum_{l \in \mathcal{S}} p^s_l(t), \qquad
E(t) = \sum_{l \in \mathcal{S}} e_l(t), 
\end{align*}
where $P^d(t)$ is the total active power demand, $P^g(t)$ is the total active power generation, $P^s(t)$ is the total active power storage feed-in and $E(t)$ is the aggregated state-of-charge of all storage devices at time $t$. 
Furthermore, $P^{\mathrm{loss}}(t) = P^g(t) - P^d(t)-P^s(t)$ are the total grid losses. 

Figure \ref{fig:5busStepLoad} and Figure \ref{fig:5busrandload} show the resulting input sequences $u(t)$, output sequences $u(t)$, demands $d(t)$, and closed-loop trajectories for the differential states $x(t)$ and algebraic states $z(t)$ over $48\text{h}$. Herein, Figure \ref{fig:5busStepLoad1} shows active and reactive power generations for all generators and storages as well as voltage magnitudes, voltage angles and state of charge of the batteries for all buses of the 5-bus system. Figure \ref{fig:5busStepLoad2} shows effective active and reactive power power demands for all buses and furthermore the aggregated active power demand $P^d(t)$. Finally, Figure \ref{fig:5busStepLoad3} depicts aggregated active power demands, infeeds and losses as well as the aggregated state of charge for all storages.

In case of occasionally varying loads (Figure~\ref{fig:5busStepLoad}) one  can observe the following: In high-demand situations ($P^d(t)$ and $P^s(t)$ for $t\in \mathbb{I}_{[8,\; 15]}$) energy from storages is used to cover the demand while in low-demand situations ($P^d(t)$ and $P^s(t)$ for $t\in \mathbb{I}_{[30,\; 40]}$) energy is stored leading to improved total operating cost. The reason for this cost-improvement is that we consider  small cost coefficients for storing energy while the cost for active power generation increases quadratically with $p^g$. Hence, it is economically beneficial to charge the storages in low-demand situations and use this energy in high-demand situations. Secondly, we can observe generator ramp constraints becoming active at the demand steps at $t=30$ and $t=40$.  
This furthermore underlines the importance of energy storage not only for load-shifting but also for providing fast ramping capabilities  (cf. $p^s(t)$ at $t=30$ ramping from $-0.2\,$p.u. to $0.2\,$p.u. in one time step), which are important in practice.\footnote{Especially in case of a high-share of renewables, conventional power plants struggle to provide fast generation ramps due to their limited transient performance induced by the underlying slow dynamics of firing and steam generation. 
} 
Figure~\ref{fig:5busrandload} shows the same quantities as Figure~\ref{fig:5busStepLoad} but with randomly generated demands. The behavior of energy shifting from low-demand to high-demand situations ($P^d(t)$ and $P^s(t)$ for $t\in \mathbb{I}_{[20,\; 35]}$) and active generator ramps  can also be observed. 
Finally, summing up the curves for $P^i$ in Figure~\ref{fig:5busStepLoad3} and Figure~\ref{fig:5busrandload3} illustrates that energy conservation 
\[
P^g(t) - P^d(t)-P^s(t) -P^{\text{loss}}(t)=0,
\]
which is implicitly considered via the \AC power flow equations \eqref{eq:powerflow}, is satisfied for all  $t\in \mathbb{I}_{[0,\; 48]}$.

\subsection{118-Bus System}

\begin{figure}[h]
	\centering
	\includegraphics[width=0.7\textwidth]{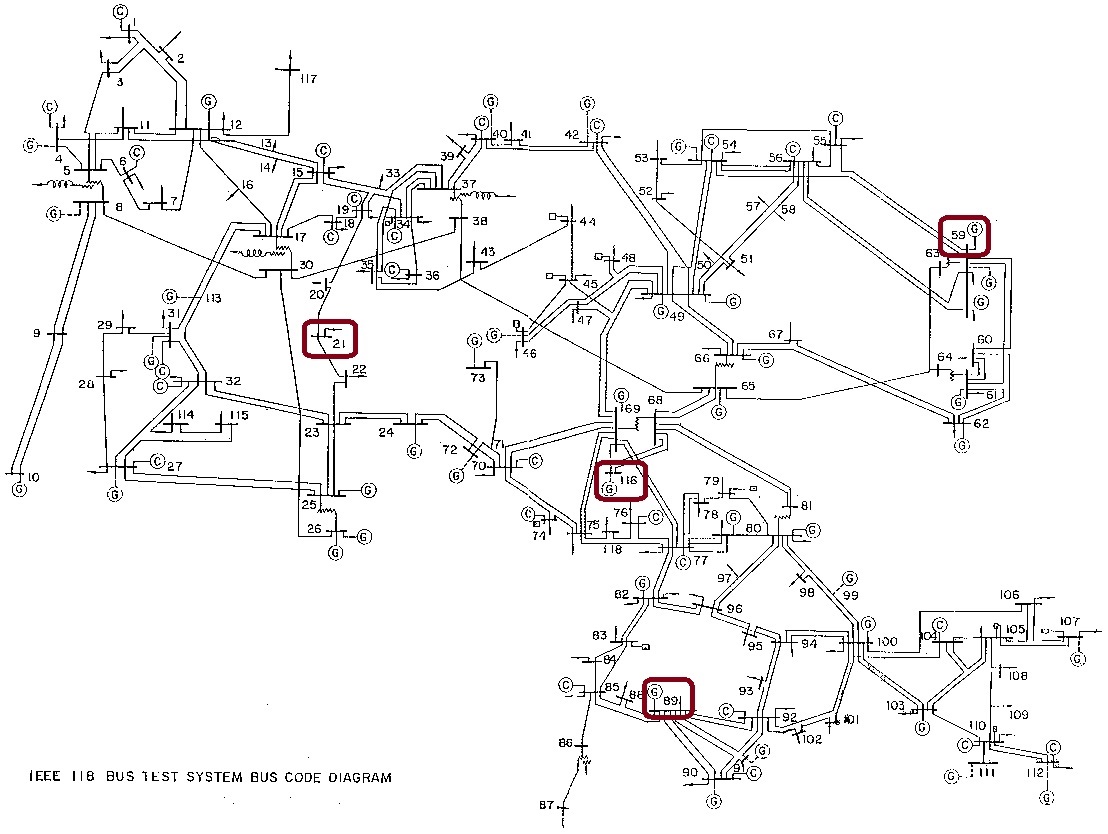}
	\caption{\IEEE 118-bus system with selected nodes (red) for numerical evaluation in Figure \ref{fig:118busStepLoad} and Figure \ref{fig:118busRandLoad} \cite{IEEE118bus}.  } 
	\label{fig:118busCase}
\end{figure}

\begin{figure}[t] 
	
	\begin{subfigure}[c]{\textwidth}
		\includegraphics[trim=0 365 0 45, clip, width=0.95\textwidth]{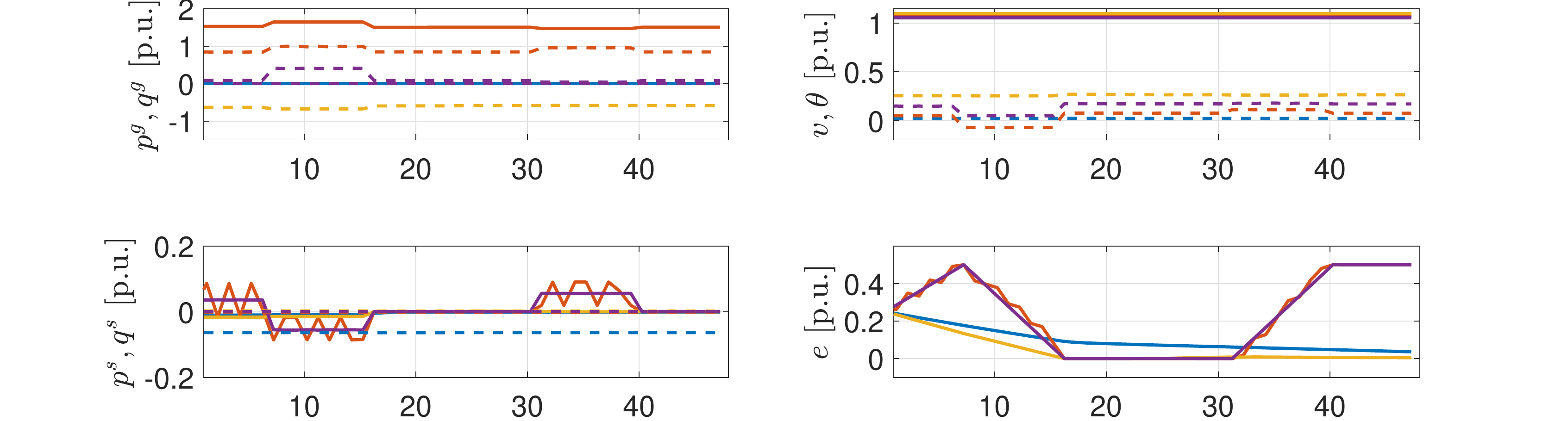}
		\captionsetup{justification   = raggedright,singlelinecheck = false,format=hang,width=.9\textwidth}
			\subcaption{Inputs $u(t)$, outputs $y(t)$, states $z(t)$ and $x(t)$ for buses $l \in \{21, 59, 89, 116\}$ with $p^g(t),p^s(t),v(t), e(t)$ in solid and  $q^g(t),q^s(t), \theta(t)$ in dashed.}
		\label{fig:118busStepLoad1}
	\end{subfigure}
	\begin{subfigure}[c]{\textwidth}
		\includegraphics[trim=0 180 0 350, clip, width=0.95\textwidth]{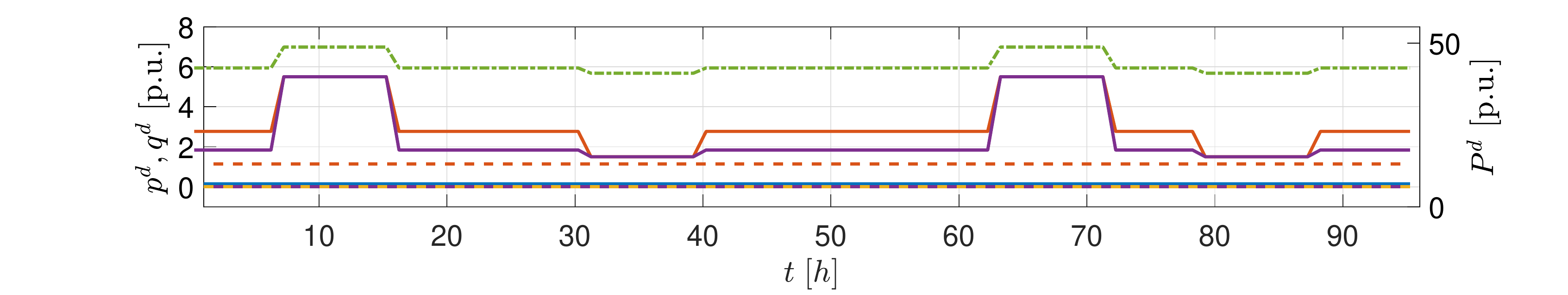}
		\captionsetup{justification   = raggedright,singlelinecheck = false,format=hang,width=.9\textwidth}
		\subcaption{Disturbances $d(t)$  for buses $l \in \{21, 59, 89, 116\}$ with $p^d(t)$ in solid, $q^d(t)$  in dashed and aggregated active power demand $P^d(t)$ in dash-dotted.}
		\label{fig:118busStepLoad2}
	\end{subfigure}
	\begin{subfigure}[c]{\textwidth}
		\includegraphics[trim=0 15 0 510, clip, width=0.95\textwidth]{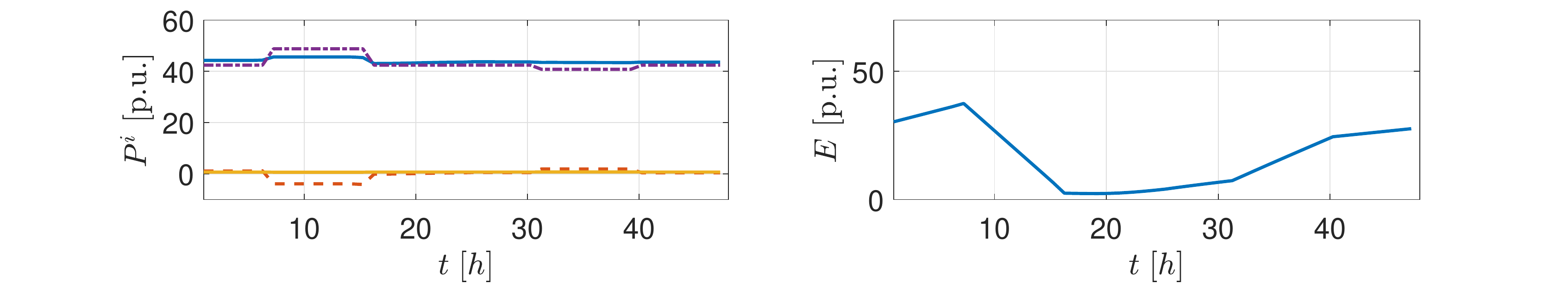}
		\captionsetup{justification   = raggedright,singlelinecheck = false,format=hang,width=.9\textwidth}
		\subcaption{Aggregated active power demands/infeeds/losses $P^i(t)$ with $i \in \{d,g,s,\mathrm{loss}\}$ and total state of charge $E(t)$ with $P^d(t)$ in dash-dotted, $P^g(t)$ in solid, $P^s(t)$ in dashed and $P^{\mathrm{loss}}(t)$ in yellow solid.}
		\label{fig:118busStepLoad3}
	\end{subfigure}
	\caption{Closed-loop trajectories for the \IEEE 118-bus system with occasionally varying effective active power demand.}
	\label{fig:118busStepLoad}
\end{figure}
\begin{figure}[t] 
		\begin{subfigure}[c]{\textwidth}
		\includegraphics[trim=0 365 0 45, clip, width=0.95\textwidth]{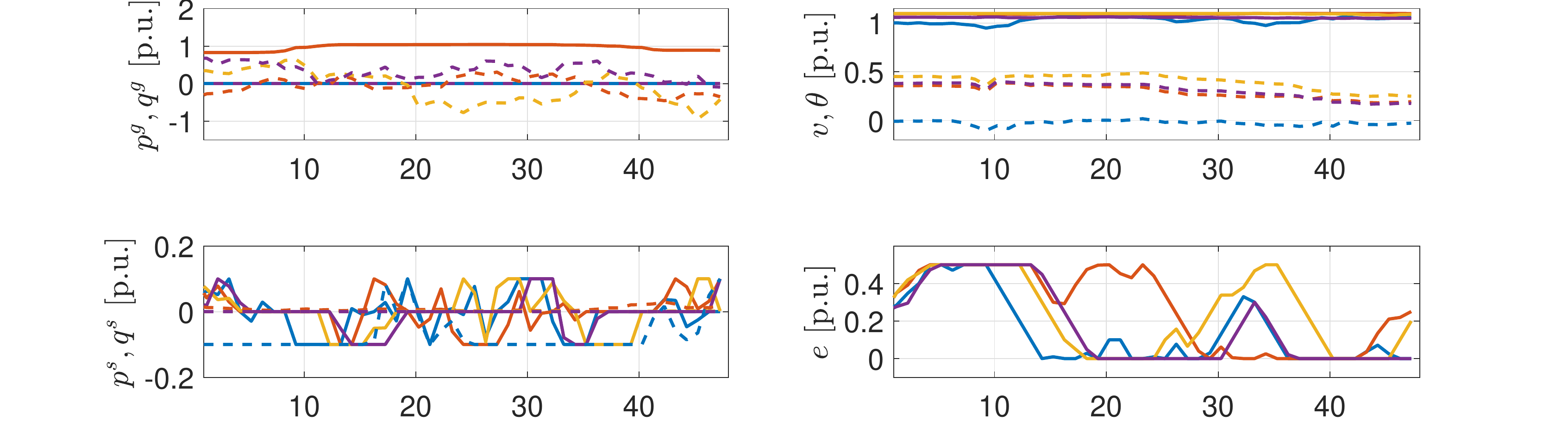}
		\captionsetup{justification   = raggedright,singlelinecheck = false,format=hang,width=.9\textwidth}
			\subcaption{Inputs $u(t)$, outputs $y(t)$, states $z(t)$ and $x(t)$ for buses $l \in \{21, 59, 89, 116\}$ with $p^g(t),p^s(t),v(t), e(t)$ in solid and  $q^g(t),q^s(t), \theta(t)$ in dashed.}
		\label{fig:118busRandLoad1}
	\end{subfigure}
	\begin{subfigure}[c]{\textwidth}
		\includegraphics[trim=0 180 0 350, clip, width=0.95\textwidth]{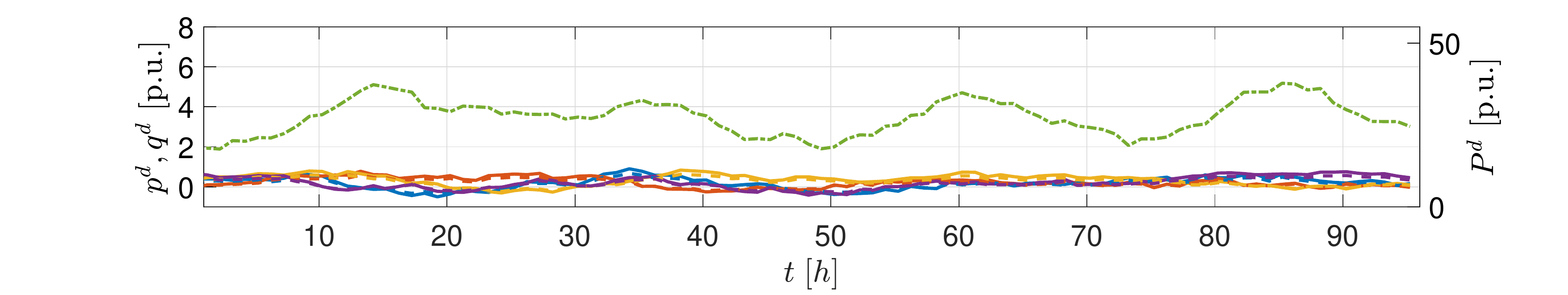}
		\captionsetup{justification   = raggedright,singlelinecheck = false,format=hang,width=.9\textwidth}
		\subcaption{Disturbances $d(t)$  for buses $l \in \{21, 59, 89, 116\}$ with $p^d(t)$ in solid, $q^d(t)$  in dashed and aggregated active power 	demand $P^d(t)$ in dash-dotted.}
		\label{fig:118busRandLoad2}
	\end{subfigure}
	\begin{subfigure}[c]{\textwidth}
		\includegraphics[trim=0 15 0 510, clip, width=0.95\textwidth]{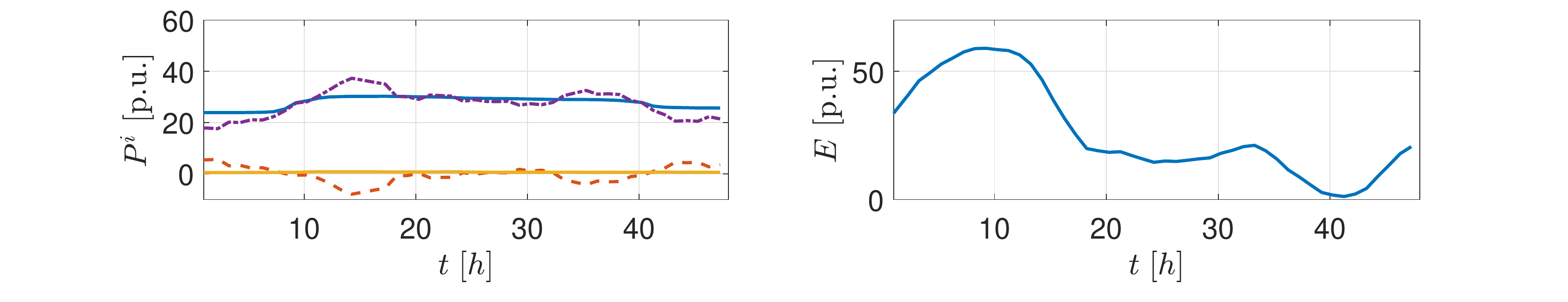}
		\captionsetup{justification   = raggedright,singlelinecheck = false,format=hang,width=.9\textwidth}
		\subcaption{Aggregated active power demands/infeeds/losses $P^i(t)$ with $i \in \{d,g,s,\mathrm{loss}\}$ and total state of charge $E(t)$ with $P^d(t)$ in dash-dotted, $P^g(t)$ in solid, $P^s(t)$ in dashed and $P^{\mathrm{loss}}(t)$ in yellow solid.}	
		\label{fig:118busRandLoad3}
	\end{subfigure}
	\caption{Closed-loop trajectories for the \IEEE 118-bus system with random effective active power demand.}
	\label{fig:118busRandLoad}
\end{figure}

Next, we consider the \IEEE 118-bus system depicted in Figure \ref{fig:118busCase}. 
Due to the large number of buses in this system, we show our results for selected buses $l \in \{21, 59, 89, 116\}$. Similarly to the 5-bus system from above, we introduce storage units with a capacity of $0.5\, \text{p.u}$ and maximum charging/discharging power of $0.1 \, \text{p.u}$ to all buses. 
Furthermore, we also consider occasionally varying demands at bus 59 and 116 with results shown in Figure~\ref{fig:118busStepLoad} and consider randomly generated demands as above in Figure~\ref{fig:118busRandLoad}. 
The parameters for the stage-cost \eqref{eq:lq_cost} are obtained from the \MATPOWER database with additional small quadratic coefficients of $10^{-3}\frac{\$}{{\text{h}\, (\text{p.u.})^2}}$ for $q^g$ and $e$ in $Q$ in order to ensure $Q\succ 0$. 
Similar to the previous case, we choose $R=0^{344\times 344}$ and $r=0^{344\times 1}$.
All effects from the previous 5-bus case (i.e. energy shifting from low-demand to high demand, active generator ramps) can also here be observed in both load cases (Figure~\ref{fig:118busStepLoad} and Figure~\ref{fig:118busRandLoad}). 
In addition to the 5-bus case from above, we observe active upper bounds for voltages. This highlights the importance of \AC \OPF over \DC \OPF as among others voltage limits can not be considered in the \DC case.

\subsection{Performance Comparison with Tracking MPC}

In this section we compare the close-loop performance of the above economic \MPC scheme with two different tracking \MPC schemes inspired from schemes in the literature to highlight the benefits of using \emph{economic} \NMPC.
The first approach to which we compare our results aims at tracking $50\%$  state of charge of the batteries (storages)
in order to keep a maximum of flexibility for future high-demand/high-load situations similar as in  \cite{Giorgio2017}. 
The state of charge tracking cost is
\[
\ell_b(x,u)= \sum_{l \in \mathcal{S}} c_l\,(e_l(t) - \tilde e_l)^2
\]
where we choose $\tilde e_l= 0.5\, \bar e_l$ and $c_l=1$.
Furthermore, an alternative  tracking \MPC formulation for multi-stage \OPF is
\[
\ell_{t}(x,u)=  \left (x(t)-x_{s}(t)\right )^\top Q\, \left (x(t)-x_{s}(t)\right ) + \left (u(t)-u_{s}(t)\right )^\top R\, \left (u(t)-u_{s}(t)\right )^\top.
\]
This formulation aims at tracking the solution sequence of the single-stage problems $x_s(d(t)), u_s(d(t))$ obtained from \eqref{eq:xsdusd}. Here, we choose $Q = I^{n_x}$ and $R=0.1\cdot$$ \diag(0^{2|\mathcal{G}|},I^{2|\mathcal{S}|})$.\footnote{Single-stage sequence here means solving individual \OPF problem for each time step $t$ and given $d(t)$ neglecting storages and generator-ramp constraints. Note that the sequence $x_{s}(\cdot)$, is as such varying with time since $p^g$ and $q^g$ change over $k$.}

Let $x_{e}(\cdot), u_{e}(\cdot)$ be the closed-loop trajectories obtained by using the economic cost $\ell$, let $x_{b}(\cdot), u_{b}(\cdot)$ be the closed-loop trajectories obtained by the state of charge tracking cost $\ell_b$, let $x_{t}(\cdot), u_{t}(\cdot)$ be the closed-loop trajectories for the single-stage tracking cost $\ell_{t}$ and let $d(\cdot)$ be the disturbance sequence. Then for the time span $\mbb{I}_{[t_0, t_f]} \subset\N$ closed-loop performances with respect to the economic cost $\ell$ are defined as
\[
J_{e} := \sum_{t= t_0}^{t_f} \ell(x_{e}(t), u_{e}(t)), \qquad
J_{b} := \sum_{t= t_0}^{t_f} \ell(x_{b}(t), u_{b}(t)),  \quad \text{and} \quad 
J_{t} := \sum_{t= t_0}^{t_f} \ell(x_{t}(t), u_{t}(t)).
\]
Note that we evaluate the economic generator cost $\ell$ along the closed-loop trajectories generated with $\ell, \ell_s$ and $\ell_t$ in order to obtain a performance comparison in terms of the underlying economic objective.
The performance of the sequence of optimal single-stage solutions is given by
\[
J_{s} := \sum_{t= t_0}^{t_f} \ell(x_s(d(t)), u_s(d(t))),
\]
where as before $x_s(d(t)), u_s(d(t))$ are obtained from \eqref{eq:xsdusd}.

Table~\ref{tab:costs} compares the closed-loop performances to the quasi steady-state for the occasionally varying disturbances and the randomly generated disturbances.
Observe that  for all cases  the performance of the sequence of optimal single-stage solutions is indeed worse than the closed-loop performance with storages and ramp constraints. This is due to the absence of storages. Not surprisingly, tracking the sequence  of optimal single-stage solutions  leads to decreased performance compared to economic \MPC. Furthermore, pure state-of-charge tracking results in substantial loss in the economic performance. 

\begin{table}[t]
	\centering
	\begin{tabular}{l||rrrr}
		case	&   $J_{e}\,[\text{k}\$]$  & $J_s\,[\text{k}\$]$ & $J_{t}\,[\text{k}\$]$ & $J_{b}\,[\text{k}\$]$  \\ 
		\hline 
		5-bus (rand.)	& 910.5   & 925.5 & 931.8 & 1,225.8 \\  
		5-bus (step)	& 973.1    & 976.7 & 978.6 & 1,265.1\\ 
		118-bus (rand.)	& 3,546.8  & 3,606.6 &3,606.8& 4,748.6\\ 
		118-bus (step)	& 6,377.5 & 6,387.4 & 6,620.0 & 7,350.6\\ 
	\end{tabular} 
	\caption{Performance comparison of \NMPC and quasi steady state with ramp constraints for a  $48\text{h}$ simulation.}
	\label{tab:costs}
\end{table}

\section{Conclusions and Outlook} \label{sec:conclusion}
This paper has presented an economic \NMPC approach to receding-horizon multi-stage \AC \OPF. 
It appears 
to be the first attempt towards an economic \NMPC-inspired analysis considering the full \AC power flow equations as  constraints of the optimization.

Specifically, we analyzed the discrete-time \OCPs arising in the considered multi-stage \AC-\OPF setting using a dissipativity notion stemming from economic \NMPC. Moreover, we have discussed the recursive feasibility properties of these \OCPs, which is of importance for receding-horizon (or \NMPC) solutions. The key step in our approach is the reformulation of the power-flow  equations in terms of set-based (projected) constraints this way avoiding the tedious technicalities of non-uniqueness of solutions to the power flow equations. For the case of constant disturbances/loads we have shown that dissipativity of the  \AC-\OPF \OCP is directly implied by  strict convexity of the usually quadratic stage cost with respect to the state variables. Moreover, we have analyzed the recursive feasibility properties of the proposed \NMPC schemes.   Finally, we have illustrated our findings drawing upon systems with $5$ and $118$ buses. The simulations underpin that using energy storage helps decreasing economic operating costs of power systems.

However, the present paper is merely a first step aiming at the transfer of the recent progress on economic \NMPC to power systems. Indeed the verification of time-varying dissipativity notions for the \AC-\OPF \OCP is still an open problem. Finally, in any real-world application the forecasts of the disturbances/loads will inevitably be surrounded by substantial stochastic uncertainties. Hence, further research on stochastic economic \NMPC for multistage \OPF is needed. 

\section{Appendix} \label{sec:app}
\begin{proof}[Proof of Lemma \ref{lem:invariance}] The proof exploits that in multi-stage \OPF problems  $A = I^{n_\nodState}$ and  $\rank(B) = {n_\nodState}$.
Part (i): Observe that each element of the state constraint set $\X$ is a steady state of \eqref{eq:sys_OPF} since $A = I^{n_\nodState}$.
Part (ii): W.l.o.g. we restrict $\U$ to some open neighborhood $\mcl{B}_\rho(0), \rho >0$ of $0$. By assumption the points $x_0$ and $x_1$ are connected by a continuous curve $\gamma:s\in [0,1] \to  \X$ such that $\gamma(0) = x_0$ and $\gamma(1) = x_1$.
Now,  consider two points $\gamma(s+\varepsilon)$ and $\gamma(s) $ with  $\varepsilon > 0$. An input $u(s, \varepsilon)$ transferring $x$ from  $\gamma(s+\varepsilon)$ to $\gamma(s) $ has to satisfy
\[
\gamma(s+\varepsilon) = A\gamma(s) + Bu(s, \varepsilon).
\] 
This equation can be solved for $u(s,\varepsilon)$ using the Moore-Penrose inverse of $B$ as follows
\[
u(s,\varepsilon) =B^\top\left(BB^\top\right)^{-1}(\gamma(s+\varepsilon) - \gamma(s))
\]
since $n_u \geq n_x$ and $\rank B = n_x$.
Due to continuity of $\gamma$, one may choose $\varepsilon >0$ such that $\frac{1}{\varepsilon} = M \in \N$ and, for all $s\in [0, 1-\varepsilon]$, it holds that $u(s,\varepsilon) \in \mcl{B}_\rho(0) \subset \U$. Moreover, since $\gamma \in \X$, the sequence
$u(s,k\varepsilon), k\in \mbb{I}_{[0, M-1]}$ transfers $x_0$ to $x_1$ in $M$ steps without leaving $\bar\X$.
 Part (iii) is an obvious implication of Part (ii).
\end{proof}
\clearpage

\printbibliography

\end{document}